\pgfplotsset{compat=1.16}
\def\R{\ensuremath{\mathbb{R}}}
\def\N{\ensuremath{\mathbb{N}}}
\def\X{\ensuremath{\mathbb{X}}}
\def\A{\ensuremath{\mathcal{A}}}
\def\taumiet{\ensuremath{\tau_\text{MIET}^i}}
\def\endstatement{\hfill$\Box$}
\def\yh{\ensuremath{\kern1.5pt\widehat{\kern-0.8pt y}\kern0.8pt}}
\def\what{\ensuremath{\widehat{\kern-0.5pt w}}}
\def\Nc{\ensuremath{\mathcal{N}}}
\let\leq\leqslant
\let\geq\geqslant
\newcommand{\yhs}[1]{\yh^{\kern-0.3pt#1}}
\newcommand{\yhss}[2]{\yh^{\kern-0.3pt#1}_{\kern-1.4pt#2}}
\newcommand{\yhsub}[1]{\yh_{\kern-1.4pt#1}}
\DeclareMathOperator{\diag}{diag}
\DeclareMathOperator*{\argmax}{arg~max}
\newtheorem{assumption}{Assumption}
\newtheorem{condition}{Condition}
\newtheorem{lemma}{Lemma}
\newtheorem{problem}{Problem}
\newtheorem{definition}{Definition}
\newtheorem{remark}{Remark}
\newenvironment{proof}{\textit{Proof.}}{\qed}
\begin{document}
\begin{frontmatter}

\title{Distributed Periodic Event-triggered Control of Nonlinear Multi-Agent Systems\thanksref{footnoteinfo}}

\thanks[footnoteinfo]{This work is supported by the ANR grant HANDY 18-CE40-0010.}

\author[TUe]{Koen J.A. Scheres}
\author[TUe]{Victor S. Dolk}
\author[TUe]{Michelle S. Chong}
\author[CRAN]{Romain Postoyan}
\author[TUe]{W.P. Maurice H. Heemels}

\address[TUe]{Department of Mechanical Engineering, Eindhoven University of Technology, The Netherlands. (k.j.a.scheres@tue.nl).}
\address[CRAN]{Universit\'e de Lorraine, CNRS, CRAN, F-54000 Nancy, France.}

\begin{abstract}                
We present a general emulation-based framework to address the distributed control of multi-agent systems over packet-based networks. We consider the setup where information is only transmitted at (non-uniform) sampling times and where packets are received with unknown delays. We design local dynamic periodic event-triggering mechanisms to generate the transmissions. The triggering mechanisms can run on non-synchronized digital platforms, i.e., we ensure that the conditions must only be verified at asynchronous sampling times, which may differ for each platform. Different stability and performance characteristics can be considered as we follow a general dissipativity-based approach. Moreover, Zeno-free properties are guaranteed by design. The results are illustrated on a consensus problem.
\end{abstract}

\begin{keyword}
 	 	Networked systems, Control over networks, Multi-agent systems, Distributed nonlinear control, Event-based control, Event-triggered and self-triggered control
\end{keyword}

\end{frontmatter}

\section{Introduction}
Distributed and multi-agent control systems, including the consensus problem, have attracted a lot of attention in recent years. When these systems communicate via a packet-based network, information is sparsely available. In these cases, event-triggered control can be used. Event-triggered control consists of generating the input signal and updating it over the packet-based network at some time instants, which are based on the available plant/agent information, to guarantee relevant stability and performance properties, see, e.g., \cite{Heemels_Johansson_Tabuada_2012}. A triggering condition is thus synthesized and monitored to decide whether a new transmission is needed. Due to the fact that these conditions are often processed on a digital platform, it is essential to take the sampling behavior of the platform into account, especially when fast sampling is not possible, e.g. in case of low-power electronics, in which case we talk of periodic event-triggered control, see, e.g., \cite{Heemels_Donkers_Teel_2013}. Moreover, in practice, the communication network suffers from imperfections such as time-varying and unknown delays, which may destroy stability of the closed-loop system. While there is abundant literature on the event-triggered control of multi-agent systems, see, e.g., \cite{Nowzari_Garcia_Cortes_2019} and references therein, to the best of our knowledge, none of the proposed approaches in the literature consider \emph{all} of the following aspects:
\begin{enumerate}
    \item nonlinear multi-system setup,
    \item fully distributed and asynchronous configuration,
    \item implementability on digital platforms,
    \item unknown and time-varying sampling and transmission delays,
    \item general stability and performance properties for (possibly unbounded) attractors (as in consensus).
\end{enumerate}
Many works treat a subset of these aspects. A prominent example is, e.g., the recent work by \cite{Yu_Chen_2021}, which provides important advancements on the nonlinear case with (large) unknown transmission delays while taking sampling into account. The main focus of our paper is providing a unifying framework addressing all of these aspects.

The main contribution in this paper is the development of a unified framework for the design of Zeno-free, decentralized and asynchronous periodic event-triggering mechanisms that can be implemented on local digital platforms. The setup proposed in this paper captures a wide range of relevant multi-agent (but also centralized) control problems by adopting a general dissipativity-based framework. Using this framework, we can consider distributed stabilization of nonlinear systems, output-regulation problems (of which the consensus-seeking problem is a particular case) and vehicle-platooning problems (in which $\mathcal{L}_p$-contractivity, $p\in[1,\infty)$, is of interest as a string stability guarantee).
A notable advantage of our setup is that clock synchronization is not required. Hence each local platform can sample and transmit independently of all other platforms, making the algorithm fully distributed.
\section{Preliminaries}
\subsection{Notation}
The sets of all non-negative and positive integers are denoted $\N$ and $\N_{>0}$, respectively. The fields of all reals, all non-negative reals and all non-positive reals are indicated by $\R$, $\R_{\geq0}$ and $\R_{\leq0}$, respectively. The identity matrix of size $N\times N$ is denoted by $I_N$, and the vectors in $\R^N$ whose elements are all ones or zeros are denoted by $\mathbf{1}_N$ and $\mathbf{0}_N$, respectively. For $N$ vectors $x_i\in\R^{n_i}$, we use the notation $(x_1,x_2,\ldots,x_N)$ to denote $\begin{bmatrix}x_1^\top&x_2^\top&\ldots&x_N^\top\end{bmatrix}^\top$. Given matrices $A_1,\ldots,A_n$, we denote by $\diag(A_1,\ldots,A_n)$ the block-diagonal matrix where the main diagonal blocks consist of the matrices $A_1$ to $A_n$ and all other blocks are zero matrices. By $\langle\cdot,\cdot\rangle$ and $|\cdot|$ we denote the usual inner product of real vectors and the Euclidean norm, respectively. We denote the logical \emph{and} and \emph{or} operators as $\land$ and $\lor$, respectively. For two matrices $A\in\R^{m\times n}$ and $B\in\R^{p\times q}$, the Kronecker product of $A$ with $B$ is denoted $A\otimes B\in\R^{mp\times nq}$. The cardinality of a finite set $\mathcal{S}$ is denoted $\vert\mathcal{S}\vert$. The notation $F:X\rightrightarrows Y$, indicates that $F$ is a set-valued mapping from $X$ to $Y$ with $F(x)\subseteq Y$ for all $x\in X$. For any $x\in\R^n$, the distance to a closed non-empty set $\A$ is denoted by $|x|_\A:=\min_{y\in\A}|x-y|$. We use $U^{\circ}(x;v)$ to denote the generalized directional derivative of Clarke of a locally Lipschitz function $U$ at $x$ in the direction $v$, i.e., $U^{\circ}(x;v):=\lim\sup_{h\rightarrow 0^+,\, y\rightarrow x}(U(y+hv)-U(y))/h$, which reduces to the standard directional derivative $\left\langle \nabla U(x),v\right\rangle$ when $U$ is continuously differentiable.

\subsection{Graph Theory}
A graph $\mathcal{G}:=(\mathcal{V},\mathcal{E})$ consists of a vertex set $\mathcal{V}:=\{1,2,...,N\}$ with $N\in\N_{>0}$ and a set of edges $\mathcal{E}\subset\mathcal{V}\times\mathcal{V}$. An ordered pair $(i,j)\in\mathcal{E}$ with $i,j\in\mathcal{V}$ is said to be an edge \emph{directed} from $i$ to $j$. A graph is called \emph{undirected} if it holds that $(i,j)\in\mathcal{E}$ if and only if $(j,i)\in\mathcal{E}$. Otherwise, the graph is a \emph{directed} graph, also referred to as a digraph. A sequence of edges $(i,j)\in\mathcal{E}$ connecting two vertices is called a directed path. For a connected graph $\mathcal{G}$, there exists a path between any two vertices in $\mathcal{V}$. A vertex $j$ is said to be an \emph{out}-neighbor of $i$ if $(i,j)\in\mathcal{E}$ and an \emph{in}-neighbor of $i$ if $(j,i)\in\mathcal{E}$. The set $\mathcal{V}^\text{in}_i$ of the in-neighbors of $i$ is defined as $\mathcal{V}^\text{in}_i:=\{j\in\mathcal{V}\;|\;(j,i)\in\mathcal{E}\}$, and the set $\mathcal{V}^\text{out}_i$ of out-neighbors of $i$ is defined as $\mathcal{V}^\text{out}_i:=\{j\in\mathcal{V}\;|\;(i,j)\in\mathcal{E}\}$. The cardinality of $\mathcal{V}_i^\text{out}$ is denoted as $N_i$.

\subsection{Hybrid systems}
We consider hybrid systems $\mathcal{H}(\mathcal{C},F,\mathcal{D},G)$ given by
\begin{subequations}\label{eq:HybridFramework}
	\begin{align}
	\dot{\xi}&= F(\xi,v), &&\text{when } \xi\in \mathcal{C},\label{eq:Flow}\\
	\xi^+&\in G(\xi), &&\text{when }\xi\in \mathcal{D},\label{eq:Jump}
	\end{align}
\end{subequations}
where $F$ and $G$ denote the flow and the jump map, respectively, $\mathcal{C}\subseteq\X$ and $\mathcal{D}\subseteq\X$ the flow and the jump set, respectively, see \cite{Goebel_Sanfelice_Teel_2012}. We adopt the notion of solutions recently proposed in \cite{Heemels_Bernard_Scheres_Postoyan_Sanfelice_2021} for hybrid systems with inputs. For these hybrid systems, we are interested in the following dissipativity property, which is close in nature to the one used in \cite{Teel_2010}.
\begin{definition}\label{defn:Vdiss}
    Let $s:\X\times\R^{n_v}\to\R$ be a \emph{supply rate} and $\mathcal{S}\subseteq\X$ be a closed non-empty set. System $\mathcal{H}$ is said to be \emph{$s$-flow-dissipative} with respect to $\mathcal{S}$, or in short, ($s,\mathcal{S}$)-flow-dissipative, if there exists a locally Lipschitz function $U:\mathbb{X}\to \R_{\geqslant0}$, called a \emph{storage function}, such that
        \begin{enumerate}
            \item there exist $\mathcal{K}_\infty$-functions $\underline{\alpha}$, $\overline{\alpha}$ such that for all $\xi\in\mathbb{X}$,
            $\underline{\alpha}(|\xi|_\mathcal{S}) \leq U(\xi) \leq \overline{\alpha}(|\xi|_\mathcal{S})$, where $|\xi|_\mathcal{S}$ denotes the distance of $\xi$ to the set $\mathcal{S}$,
            \item for all $\xi\in\mathcal{C}$, $v\in\R^{n_v}$ and $f\in F(\xi,v)$,
            $U^{\circ}(\xi;f) \leqslant s(\xi,v)$, where $U^\circ$ denotes the generalized directional derivative of Clarke,
            \item for all $\xi\in\mathcal{D}$ and all $g\in G(\xi)$,
                $U(g) - U(\xi) \leqslant 0$.\endstatement
        \end{enumerate}
\end{definition}
\section{System setup}\label{sec:MASsetup}
\subsection{Multi-agent systems}
We consider the setting where multiple agents, each with a local digital platform, communicate with each other via a packet-based network to achieve a common goal such as stabilization, consensus, $\mathcal{L}_p$-performance, etc., which will be captured by a dissipativity property as in Definition \ref{defn:Vdiss}, as explained further below. To be precise, we consider a collection of $N\in\N_{>0}$ heterogeneous agents $\A_1,\A_2,\ldots,\A_N$, which are interconnected according to a digraph $\mathcal{G}(\mathcal{V},\mathcal{E})$ where $\mathcal{V}:=\{1,2,\ldots,N\}$. The digital platform of each agent is used for the implementation of algorithms and control computations. Due to the digital nature, measurements are not continuously available, but only on specific sampling times, as explained in more detail in Section \ref{sec:digitalPlatform}. The dynamics of the $i^\text{th}$ agent, $i\in\Nc:=\{1,2,\ldots,N\}$, are given by
\begin{equation}\label{eq:agenti}
    \A_i:\begin{cases}
        \begin{aligned}
            \dot{x}_i&=f_i(x,\yhss{\text{in}}{i},v_i),\\
            y_i&=h_i(x_i),
        \end{aligned}
    \end{cases}
\end{equation}
where $x_i\in\R^{n_{x,i}}$ is the local state vector, $x:=(x_1,x_2,\ldots,x_N)\in\R^{n_x}$ with $n_x:=\sum_{i\in\Nc}n_{x,i}$ is the global state vector, $v_i\in\R^{n_{v,i}}$ is a local exogenous disturbance or input, $y_i\in\R^{n_{y,i}}$ is the local output, $y:=(y_1,y_2,\ldots,y_N)\in\R^{n_y}$ with $n_y:=\sum_{i\in\Nc}n_{y,i}$ is the global output and $\yhss{\text{in}}{i}\in\R^{n_y}$ is agent $\A_i$'s estimate of the outputs of agents $\A_m$, $m\in\mathcal{V}^\text{in}_i$, where $\mathcal{V}_i^\text{in}$ denotes the collection of all agents that transmit information to agent $\A_i$, as will be explained in more detail in Section \ref{sec:outputs}. We assume that the functions $f_i$ are continuous and that the functions $h_i$ are continuously differentiable. The maps $f_i$ may depend on the entire vector $x$, implying that we can allow physical couplings between agents, for example in an interconnected physical plant, see Fig. \ref{fig:setup} for an illustration.
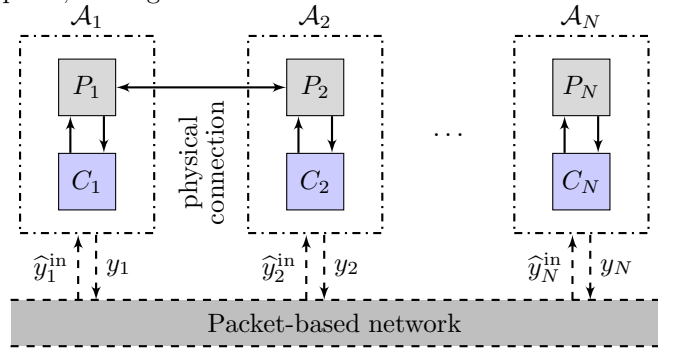
\begin{figure}[ht!]
    \centering
%
%
\vspace{-3mm}
\tikzstyle{controller} = [draw, fill=blue!20, rectangle, minimum height=0.75cm, minimum width=0.75cm]
\tikzstyle{dots} = [draw=none, rectangle, minimum height=0.75cm, minimum width=0.75cm]
\tikzstyle{plant} = [draw, fill=gray!30, rectangle, minimum height=0.75cm, minimum width=0.75cm]
\tikzstyle{network} = [draw=none, fill=gray!50, rectangle, minimum height=0.6cm, minimum width=8.5cm]
\tikzstyle{generator} = [draw, rectangle, minimum height=0.5cm, minimum width=1cm]
\begin{tikzpicture}[node distance=1.5cm,>=latex']
    \node[plant] (p1) {$P_1$};
    \node[controller, below of=p1, node distance=1.25cm] (c1) {$C_1$};
    \draw[thick, dashdotted] ($(p1.north west)+(-0.5,0.3)$) rectangle ($(c1.south east)+(0.5,-0.3)$) node[midway,align=center,label={[yshift=1.15cm]$\mathcal{A}_1$}] (a1) {};
    \draw [->, thick] (p1.300)--(c1.60);
    \draw [<-, thick] (p1.240)--(c1.120);

    \node[plant, right of=p1, node distance=3cm] (p2) {$P_2$};
    \node [controller, below of=p2, node distance=1.25cm] (c2) {$C_2$};
    \draw[thick, dashdotted] ($(p2.north west)+(-0.5,0.3)$) rectangle ($(c2.south east)+(0.5,-0.3)$) node[midway,align=center,label={[yshift=1.15cm]$\mathcal{A}_2$}] (a2) {};
    \draw [->, thick] (p2.300)--(c2.60);
    \draw [<-, thick] (p2.240)--(c2.120);

    \draw[<->, thick] (p1)--(p2) node[align=center, midway, rotate=90, xshift=-1cm, yshift=0em] {physical\\connection};

    \node[right of=a2, node distance=1.75cm] {$\ldots$};

    \node[plant, right of=p2, node distance=3.5cm] (pn) {$P_N$};
    \node[controller, below of=pn, node distance=1.25cm] (cn) {$C_N$};
    \draw[->, thick] (pn.300)--(cn.60);
    \draw[<-, thick] (pn.240)--(cn.120);
    \draw[thick, dashdotted] ($(pn.north west)+(-0.5,0.3)$) rectangle ($(cn.south east)+(0.5,-0.3)$) node[midway,align=center,label={[yshift=1.15cm]$\mathcal{A}_N$}] (an) {};

    \node[network, yshift=-2.5cm] at ($(a1)!0.5!(an)$) (net) {Packet-based network};

    \node at ($(c1.south west)+(0.5,-0.2)$) (y1) {};
    \node at ($(c1.south east)+(-0.5,-0.2)$) (yin1) {};
    \draw[->, thick, dashed] (y1)--(net.north-|y1) node[midway, right] {$y_1$};
    \draw[->, thick, dashed] (net.north-|yin1)--(yin1) node[midway, left] {$\yhss{\mathrm{in}}{1}$};

    \node at ($(c2.south west)+(0.5,-0.2)$) (y2) {};
    \node at ($(c2.south east)+(-0.5,-0.2)$) (yin2) {};
    \draw[->, thick, dashed] (y2)--(net.north-|y2) node[midway, right] {$y_2$};
    \draw[->, thick, dashed] (net.north-|yin2)--(yin2) node[midway, left] {$\yhss{\mathrm{in}}{2}$};

    \node at ($(cn.south west)+(0.5,-0.2)$) (yn) {};
    \node at ($(cn.south east)+(-0.5,-0.2)$) (yinn) {};
    \draw[->, thick, dashed] (yn)--(net.north-|yn) node[midway, right] {$y_N$};
    \draw[->, thick, dashed] (net.north-|yinn)--(yinn) node[midway, left] {$\yhss{\mathrm{in}}{N}$};

    \draw[thick, dashed] (net.north west)--(net.north east);
    \draw[thick, dashed] (net.south west)--(net.south east);

\end{tikzpicture}
\vspace{-2mm}
    \caption{\small Example of a networked control setup with several agents connected to a packet-based network and a physical connection between $\A_1$ and $\A_2$.}
    \label{fig:setup}
\end{figure}

Typical systems that can be captured by a multi-agent system are, e.g., (physically uncoupled) agents, a fleet of vehicles seeking consensus, or a distributed plant (with physical couplings) with distributed controllers. To design the controllers, we proceed with an emulation-based approach in which a (local) controller is designed such that, in the absence of a network, i.e., when $\yhss{\text{in}}{i}=y$ for all $i\in\mathcal{V}_i^\text{in}$, the system satisfies the desired stability and performance criteria. Since the controller is implemented on a digital platform, we assume that either the controller is static and updated during the sampling times of the output (see Section \ref{sec:digitalPlatform}), or, when the (local) controllers are dynamic, that they run on a separate platform, whose sampling times are much faster and hence they operate in (near) continuous-time.

\subsection{Transmitted outputs}\label{sec:outputs}
Due to the usage of a packet-based network, agent $\A_i$, $i\in\Nc$, does not have continuous access to the output $y_m$ of agent $\A_m$, $m\in\mathcal{V}^\text{in}_i$. Instead, agent $\A_i$ only has an estimate $\yhss{i}{m}$ of $y_m$, $m\in\mathcal{V}^\text{in}_i$, collected in the estimation vector $\yhss{\text{in}}{i}:=(\yhss{i}{1},\yhss{i}{2},\ldots,\yhss{i}{N})\in\R^{n_y}$. For all $m\in\mathcal{N}\setminus\mathcal{V}^\text{in}_i$, $\yhss{i}{m}$ is not relevant and simply set to zero.

At discrete times $t_k^i$, $k\in\N$, $i\in\Nc$, that satisfy $0=t_0^i<t_1^i<\ldots$, the output $y_i$ of agent $\mathcal{A}_i$ is broadcasted over the network to all (connected) agents $\A_m$, where $m\in\mathcal{V}^\text{out}_i$ with $\mathcal{V}_i^\text{out}$ the collection of all agents that receive information from agent $\A_i$. Due to possible network delays, the estimate $\yhss{m}{i}$, $m\in\mathcal{V}^\text{out}_i$, is updated after a delay of $\Delta^{i,m}_k\geq0$.  Note that the delays are different for each (receiving) agent. The update of the estimate $\yhss{m}{i}$, $i\in\Nc$, $m\in\mathcal{V}^\text{out}_i$, can be expressed as
\begin{equation}
    \yhss{m}{i}((t_k^i+\Delta^{i,m}_k)^+)=y_i(t_k^i).
\end{equation}
In between transmissions, the estimate $\yhss{m}{i}$ is generated by a zero-order-hold (ZOH) device, i.e.,
\begin{equation}\label{eq:holdingfunction}
    \dot{\kern7pt\yhss{m}{i}}(t)=0,
\end{equation}
for all $t\in(t_k^i+\Delta^{i,m}_k,t_{k+1}^i+\Delta^{i,m}_{k+1})$, with $i\in\Nc$, $m\in\mathcal{V}^\text{out}_i$, $k\in\N$.

The transmission times $t_k^i$ are determined by an \emph{event generator} or \emph{triggering mechanism}. Inspired by \cite{Girard_2015} and \cite{Dolk_Borgers_Heemels_2017}, we consider dynamic event triggering rules, where an auxiliary variable $\eta_i\in\R_{\geq0}$, $i\in\Nc$, whose dynamics are designed in the sequel, is used to determine the transmission times $t_k^i$, $k\in\N$, see Section \ref{sec:triggering}.

\subsection{Digital platform}\label{sec:digitalPlatform}
The triggering mechanism of each agent is implemented on the local digital platform, which has its own sampling times. The sequence of sampling times of agent $\A_i$ is denoted $\{s_n^i\}_{n\in\N}$, where $s_n^i$ denotes the $n^\text{th}$ local sampling instant of agent $\A_i$. Transmissions generated by $\mathcal{A}_i$ occur on a subset of the sampling instants, i.e.,
\begin{equation}\label{eq:transmissiontimes}
    \{t_k^i\}_{k\in\N}\subseteq\{s_n^i\}_{n\in\N}.
\end{equation}
Inspired by \cite{Wang_Postoyan_Nesic_Heemels_2020}, we consider the general setting where the inter-sampling times satisfy,
\begin{equation}\label{eq:masp}
    0<d_i\leq s_{n+1}^i-s_n^i\leq \tau_\text{MASP}^i,
\end{equation}
where $d_i$ is an arbitrarily small but positive constant and $\tau_\text{MASP}^i$ denotes the \emph{maximum allowable sampling period} (MASP) for agent $\A_i$, $i\in\Nc$. The sampling times $\{s_n^i\}_{n\in\N}$ and $\{s_n^j\}_{n\in\N}$ of agents $\A_i$ and $\A_j$, respectively, are a priori not related for $i\neq j$. In other words, all agents operate \emph{independently} and \emph{asynchronously}.

Due to the agents operating asynchronously, the arrival times $t_k^i+\Delta^{i,m}_k$, $k\in\N$, of new information at agent $\A_m$ from agent $\A_i$ may not coincide with the sampling times $\{s_n^m\}_{n\in\N}$ of agent $\A_m$, hence information may be received in between consecutive sampling times of agent $\A_m$. However, the sampling-induced delay (the time between the arrival of information from agent $\A_i$ and the next sampling instant of agent $\A_m$) can be included in the total delay denoted $\overline{\Delta}^{i,m}_k$. Therefore, the total delay $\overline{\Delta}^{i,m}_k$ is equal to the combined communication delay $\Delta^{i,m}_k$ and sampling-induced delay. Through this setup, we obtain
\begin{equation}\label{eq:receivingtimes}
    \{t_k^i+\overline{\Delta}^{i,m}_k\}_{k\in\N}\subseteq\{s^m_n\}_{n\in\N}
\end{equation}
for all $m\in\Nc$ and $i\in\mathcal{V}^\text{out}_m$. 

We adopt the following assumption on the total delays $\overline{\Delta}^{i,m}_k$, $k\in\N$.
\begin{assumption}\label{ass:smalldelay}
    For each $i\in\Nc$, there is a time-constant $\tau_\text{MAD}^i$ such that the delays are bounded according to $0\leq\overline{\Delta}^{i,m}_k\leq\tau_\text{MAD}^i\leq t_{k+1}^i-t_k^i$ for all $m\in\mathcal{V}^\text{out}_i$ and all $k\in\N$, where $\tau_\text{MAD}^i$ denotes the \emph{maximum allowable delay} (MAD) for agent $\A_i$.\endstatement
\end{assumption}
Assumption \ref{ass:smalldelay} is a ``small delay'' condition, which also implies that packets sent from $\A_i$ to $\A_m$, $m\in\mathcal{V}^\text{out}_i$, are received in the same order that they are transmitted.

Since the sampling-induced delays are never larger than the local MASP $\tau_\text{MASP}^m$ at agent $m$, we have that
\begin{equation}\label{eq:madmasp}
    \tau_\text{MAD}^i\geq\tau_\text{MASP}^m+\Delta^{i,m}_k~\text{for all}~i\in\Nc,m\in\mathcal{V}^\text{out}_i,k\in\N.
\end{equation}
\subsection{Triggering rule}\label{sec:triggering}
Our goal is to employ dynamic event triggering, which relies on locally available information, namely output measurements. Due to this information only being available at the sampling instants $\{s^i_n\}_{n\in\N}$, the design of, e.g., \cite{Dolk_Borgers_Heemels_2017} cannot be directly used. Instead, we consider an event-triggering mechanism (ETM) in the form
\begin{equation}\label{eq:ETCCondition}
    \begin{aligned}
        t_{k+1}^i&:=\inf\{t\geq t_k^i+\tau_\text{MIET}^i\mid\\
        &\eta_i(t)+\nu_i(y_i(t),\yhss{\text{out}}{i}(t),\tau_i(t))\leq0,t\in\{s_n^i\}_{n\in\N}\},
    \end{aligned}
\end{equation}
for $i\in\Nc$, $k\in\N$, with $t_0^i=0$ and where $\tau_\text{MIET}^i>0$ denotes the (enforced lower bound on the) \emph{minimum inter-event time} (MIET) of agent $\A_i$, $\eta_i\in\R_{\geq0}$ is the auxiliary variable mentioned earlier, $\yhss{\text{out}}{i}:=(\yhss{1}{i},\yhss{2}{i},\ldots,\yhss{N}{i})$ is the vector of estimates of the output $y_i$ at the agents $\A_m$, $m\in\mathcal{V}^\text{out}_i$. Variable $\tau_i\in\R_{\geq0}$ is a local timer that is set to zero after each transmission of the output $y_i$ over the network, and whose dynamics are given by $\dot{\tau}_i=1$ in between two successive transmissions of agent $\A_i$. The function $\nu_i:\R^{n_{y,i}}\times\R^{Nn_{y,i}}\times\R_{\geq0}\to\R_{\leq0}$ is to be designed.

At first glance it might seem unnatural that agent $\A_i$ has to know the estimates $\yhss{\text{out}}{i}$ due to the presence of the unknown and time-varying delays. However, this information is only needed when $\tau_i\geq\tau_\text{MIET}^i$, and since $\tau_\text{MIET}^i\geq\tau_\text{MAD}^i$ as we will see in Section \ref{sec:DesignConditions}, all agents $\A_m$, $m\in\mathcal{V}^\text{out}_i$, will have received the latest transmission of agent $\A_i$ for $\tau_i\geq\tau_\text{MIET}^i$. Due to the fact that a ZOH is used to generate the estimated output, see \eqref{eq:holdingfunction}, $\yhss{\text{out}}{i}$ is thus the vector consisting of the last value of the output $y_i$ sent by agent $\A_i$, which is available at agent $\A_i$ for $\tau_i\geq\tau_\text{MIET}^i$.

The ETM \eqref{eq:ETCCondition} satisfies the constraints that arise from the usage of a digital platform, as the trigger condition in \eqref{eq:ETCCondition} only has to be evaluated at the local sampling times $s_n^i$, $n\in\N$. The triggering variable $\eta_i$ generated locally by agent $\A_i$, $i\in\Nc$, evolves according to
\begin{subequations}\label{eq:ETCTriggerDesign}
    \begin{align}
        \dot{\eta}_i&=\Psi_i(\yhss{\text{in}}{i})-\varphi_i(\eta_i),\label{eq:etadot}\\
        \eta_i(t^+)&\in\left\{\begin{array}{l}
            \{\eta_i+\varrho_i(y_i,\yhss{\text{out}}{i})\},\,\text{for all}~t\in\{t_k^i\}_{k\in\N},\\
            \{\eta_i+\nu_i(y_i,\yhss{\text{out}}{i},\tau_i)\},\\
            \hspace{13.5mm}\text{for all}~t\in\{s_n^i\}_{n\in\N}\setminus\{t_k^i\}_{k\in\N},
        \end{array}\right.
    \end{align}
\end{subequations}
where the functions $\Psi_i:\R^{n_y}\to\R$, $\varrho_i:\R^{n_{y,i}}\times\R^{Nn_{y,i}}\to\R_{\geq0}$, $\varphi_i\in\mathcal{K}_\infty$ and the constant $\tau_\text{MIET}^i\in\R_{>0}$ are designed in Section \ref{sec:ETMdesign}.
\begin{remark}
    In \eqref{eq:etadot}, a continuous-time differential equation is used. However, since the `external' variable $\yhss{\text{in}}{i}$ is constant in between consecutive sampling times, exact discretization or numerical integration can be used to compute $\eta_i$ a posteriori based on the elapsed time since the previous sampling time. For instance, if $\varphi_i(\eta_i)=\alpha_i\eta_i$ with a constant $\alpha_i\neq0$, we obtain the exact discretization $\eta_{i}(s_{n+1}^i) = e^{-\alpha_i(s_{n+1}^i- s_n^i)} \eta_{i}((s_{n}^{i})^+) +\alpha_i^{-1}[1-e^{-\alpha_i(s_{n+1}^i- s_n^i)}]\Psi_{i}(\yhss{\text{in}}{i}((s_n^i)^+))$. Hence, exact solutions to the differential equation can be obtained on a digital platform. However, we consider the dynamics of $\eta_i$ as presented in \eqref{eq:ETCTriggerDesign} to facilitate the modeling and stability analysis later on.\endstatement
\end{remark}
\subsection{Objective}\label{sec:objective}
Given the descriptions above, the problem considered in this paper can now be stated informally as follows. Consider a collection of maximum allowable delays $\tau_\text{MAD}^i$, $i\in\Nc$, satisfying Assumption \ref{ass:smalldelay}. Our objective is to propose design conditions for the time constants $\tau_\text{MIET}^i(\geqslant \tau_\text{MAD}^i)$, the functions $\Psi_i$, $\varphi_i$, $\varrho_{i}$ and $\nu_i$, $i\in\Nc$, as in \eqref{eq:ETCCondition} and \eqref{eq:ETCTriggerDesign}, such that the resulting system has the desired (and to be specified) closed-loop stability, performance and robustness properties formalized in terms of suitable dissipativity properties.

\section{Hybrid modeling}
To facilitate the modeling of the overall networked system, some helpful notation is introduced in this section.
\subsection{Network-induced errors}
For all $i\in\Nc$ and $m\in\mathcal{V}^\text{out}_i$, we denote the \emph{network-induced error} $e_i^m$ as the difference between the output $y_i$ of agent $\A_i$ and the estimate $\yhss{m}{i}$ of the output $y_i$ at agent $\A_m$. For all $i\in\Nc$ and $m\in\Nc\setminus\mathcal{V}^\text{out}_i$, i.e., for all redundant variables, we set $e^m_i=0$. Hence, we have
\begin{equation}\label{eq:networkerror}
    e^m_i:=\begin{cases}\yhss{m}{i}-y_i,&\text{if }m\in\mathcal{V}^\text{out}_i,\\0,&\text{if }m\in\Nc\setminus\mathcal{V}^\text{out}_i.\end{cases}
\end{equation}
We define two separate concatenations of the network-induced error associated to agent $\A_i$, $i\in\Nc$. The first one, denoted $e_i^\text{out}:=(e_i^1,e_i^2,\ldots,e_i^N)\in\overline{\mathbb{E}}_i$, where $\overline{\mathbb{E}}_i:=\mathbb{E}_i(1)\times\mathbb{E}_i(2)\times\ldots\times\mathbb{E}_i(N)$ and with
\begin{equation*}
   \mathbb{E}_i(m):=\begin{cases}\R^{n_{y,i}},&\text{if }m\in\mathcal{V}^\text{out}_i,\\\{\mathbf{0}_{n_{y,i}}\},&\text{otherwise},\end{cases}
\end{equation*}
is the concatenation of the network-induced errors associated to the output $y_i$. The second, denoted $e^\text{in}_i:=(e^i_1,e^i_2,\ldots,e^i_N)\in\mathbb{E}_i$, with $\mathbb{E}_i:=\mathbb{E}_1(i)\times\mathbb{E}_2(i)\times\ldots\times\mathbb{E}_N(i)$, is the concatenation of network-induced errors of the estimated outputs available at agent $\A_i$, $i\in\mathcal{N}$. Moreover, we define the concatenation of all network-induced errors $e_i^\text{in}$, for $i\in\Nc$, as $e:=(e^\text{in}_1,e^\text{in}_2,\ldots,e^\text{in}_N)\in\mathbb{E}$ with $\mathbb{E}:=\mathbb{E}_1\times\mathbb{E}_2\times\ldots\times\mathbb{E}_N$. Observe that $|e|^2=\sum_{i\in\Nc}|e^\text{out}_i|^2=\sum_{i\in\Nc}|e^\text{in}_i|^2$.

\subsection{Clock variables}

To be able to cast the overall system described in Section \ref{sec:MASsetup} in the form of $\mathcal{H}(\mathcal{C},F,\mathcal{D},G)$, we need to introduce some auxiliary variables. Each agent $\A_i$, $i\in\Nc$, has two local timers. We already saw that $\tau_i$ captures the time elapsed since the last transmission of agent $\A_i$, see \eqref{eq:ETCCondition}. The second, denoted $\sigma_i$, keeps track of the time elapsed since the last sampling instant of agent $\A_i$, i.e., $\dot{\sigma}_i(t)=1$ for all $t\in\R\setminus\{s_n^i\}_{n\in\N}$ and is reset to zero at each sampling instant, i.e., $\sigma_i(t^+)=0$ for all $t\in\{s_n^i\}_{n\in\N}$. Observe that $\tau_i$ takes values in $\R_{\geq0}$ and that $\sigma_i$ takes values in $\mathbb{T}_i:=[0,\tau_\text{MASP}^i]$ due to \eqref{eq:masp}. Their concatenations are defined as $\tau:=(\tau_{1},\tau_{2},\ldots,\tau_{N})\in\R_{\geq0}^{N}$ and $\sigma:=\mathbb{T}$ with $\mathbb{T}:=\mathbb{T}_1\times\mathbb{T}_2\times\ldots\times\mathbb{T}_N$.

\subsection{Indicator variables}
We also define two indicator variables, $\ell_i^m\in\{0,1\}$ and $b_i^m\in\{0,1\}$. The variable $\ell_i^m$ is used to indicate whether the most recently transmitted output value $y_i$ of agent $\A_i$ has been received by agent $\A_m$ ($\ell_i^m=0$), or that it still has to be received by agent $\A_m$ ($\ell_i^m=1$). Since information received by agent $\A_m$ is \emph{processed} at the sampling times, we assume that information is buffered if it is received between sampling instances. The variable $b_i^m$ indicates whether agent $\A_m$ will process (i.e., update $\yhss{m}{i}$) the most recently transmitted output value by $\A_i$ ($b_i^m=1$) or that $\yhss{m}{i}$ will not be updated at its next sampling instance ($b_i^m=0$). We distinguish between these two ``events'' to ensure that updates of $\yhs{m}$ align with the sampling times of agent $\A_m$, as described in \eqref{eq:receivingtimes}. A graphical representation of $\ell_i^m$ and $b_i^m$ is drawn in Fig. \ref{fig:ellandb}.
\begin{figure}[ht!]
    \centering
\vspace{-6mm}
\begin{tikzpicture}[scale=1,rotLabel/.style={above,rotate=45,anchor=200}]
    \draw[line width=3pt, black!60, opacity=0.7, line cap=round] (4,0)--(7,0);
    \draw[line width=3pt, black!60, opacity=0.7, line cap=round] (8,0)--(8.5,0);

    \draw[thick] (0,0) -- (8.5,0);

    \foreach \x in {0,1.16,1.7,2.3,3.2,4.1}
        \filldraw[NavyBlue] (1.8*\x,0) circle (2pt);

    \draw[dashdotted, thick] (4,12pt)--(4,-12pt) node[below] {$t_{k}^i$};
    \draw[dashdotted, thick] (8,12pt)--(8,-12pt) node[below] {$t_{k+1}^i$};

    \draw[<-, thick] (0,3pt)--(0,9pt) node[above] {$s_n^m$};
    \draw[<-, thick] (1.8*1.16,3pt)--(1.8*1.16,9pt) node[above] {$s_{n+1}^m$};
    \draw[<-, thick] (1.8*1.7,3pt)--(1.8*1.7,9pt) node[above=2mm] {$\ldots$};
    \draw[<-, thick] (1.8*2.3,3pt)--(1.8*2.3,9pt);
    \draw[<-, thick, BrickRed] (1.8*3.2,3pt)--(1.8*3.2,9pt);
    \draw[<-, thick] (1.8*4.1,3pt)--(1.8*4.1,9pt);

    \draw[decorate, decoration = {brace}, ultra thick] (7,-6pt)--++(-3,0) node[midway, below] {$\tau_\mathrm{MAD}^i$};

    \draw[very thick, BrickRed] (1.8*2.5,-4pt)--(1.8*2.5,9pt) node[right,rotate=45,yshift=2mm,xshift=0mm] {\color{black}$t_k^i+\Delta_k^{i,m}$};
    \draw[thick] (1.8*3.2,2.5pt)--(1.8*3.2,2.5pt) node[right,rotate=45,yshift=4mm,xshift=2mm] {$t_k^i+\overline{\Delta}_k^{i,m}$};

    \begin{axis}[
        width=7.75cm,height=0.4cm,scale only axis,
        yshift=-1.6cm,xshift=0.75cm,
        xmin=0.75,xmax=8.5,ymin=0,ymax=1,
        ylabel={$\ell^m_i$},ytick={0,1},xtick={},axis x line=none,ylabel style={rotate=-90}]
    ]
    \addplot[very thick] coordinates {
		(0.75,0)
		(4,0)
		(4,1)
		(1.8*2.5,1)
		(1.8*2.5,0)
		(8.5,0)
	};
    \end{axis}

    \begin{axis}[
        width=7.75cm,height=0.4cm,scale only axis,
        yshift=-2.3cm,xshift=0.75cm,
        xmin=0.75,xmax=8.5,ymin=0,ymax=1,
        ylabel={$b_i^m$},ytick={0,1},xtick={},axis x line=none,ylabel style={rotate=-90}]
    ]
    \addplot[very thick] coordinates {
		(0.75,0)
		(1.8*2.5,0)
		(1.8*2.5,1)
		(1.8*3.2,1)
		(1.8*3.2,0)
		(8.5,0)
	};
    \end{axis}
\end{tikzpicture}
\vspace{-2mm}
    \caption{\small Graphical representation of the indicator variables $\ell_i^m$ and $b_i^m$. Blue dots indicate $s_k^m$, $k\in\N$.}\label{fig:ellandb}
\end{figure}
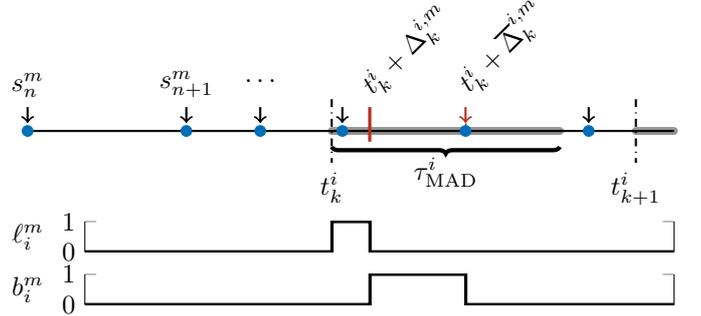

Observe in Fig. \ref{fig:ellandb}, for all $t\in[t_k^i,t_k^i+\overline{\Delta}_k^{i,m}]$, $\ell_i^m(t)+b_i^m(t)=1$. The sum of $b_i^m$ and $\ell_i^m$ indicates whether the most recently transmitted information of agent $\A_i$ has been received and processed by agent $\A_m$ ($\ell_i^m+b_i^m=0$) or that it still has to be received or processed by agent $\A_m$ ($\ell_i^m+b_i^m=1$). Moreover, due to Assumption \ref{ass:smalldelay}, both $\ell_i^m$ and $b_i^m$ are guaranteed to be zero for all $\tau\geq\tau_\text{MAD}^i$. To elucidate, we have that $\ell_i^m+b_i^m\in\{0,1\}$ for all $t\in\R_{\geq0}$. These variables are concatenated as $\ell:=(\ell_{1}^1,\ell_{2}^1,\ldots,\ell_{N}^1,\ell_{1}^2,\ell_{2}^2,\ldots,\ell_{N}^N)\in\{0,1\}^{N^2}$ and $b:=(b_{1}^1,b_{2}^1,\ldots,b_{N}^1,b_{1}^2,b_{2}^2,\ldots,b_{N}^N)\in\{0,1\}^{N^2}$.

Additionally, we define for all $i\in\Nc$, the memory variable $r_i\in\R^{n_{y,i}}$ that stores the value of the $y_i$ at the transmission times $t_k^i$, $k\in\N$. Hence, $r_i$ has dynamics $\dot{r}_i=0$ and $r_i^+=y_i$ if $\A_i$ transmits its output value. The concatenation is denoted $r:=(r_1,r_2,\ldots,r_N)\in\R^{n_y}$.

Based on the above, we can write the update of the local error $e_i^\text{in}$ during the sampling times $\{s_n^i\}_{n\in\N}$ of agent $\A_i$ as $e^\text{in}_i((s_n^i)^+)=e^\text{in}_i(s_n^i)+\diag(b^\text{in}_i(s_n^i))((r(s_n^i)-y(s_n^i))-e^\text{in}_i(s_n^i))$, where $b^\text{in}_i:=(b^i_1,b^i_2,\ldots,b^i_N)$.

\subsection{Overall system}
Using these definitions, the networked system \eqref{eq:agenti} can be cast into the form of a hybrid system $\mathcal{H}(\mathcal{C},F,\mathcal{D},G)$ with state $\xi:=\left(x,e,\tau,\sigma,r,\ell,b,{\eta}\right)\in\mathbb{X}$, where $\eta:=(\eta_{1},\eta_{2},\ldots,\eta_{N})\in\R_{\geq 0}^{N}$ and $\mathbb{X}:=\big\{ (x,e,\tau,\sigma,r,\ell,b,{\eta})\in\allowbreak\R^{n_{x}}\times\allowbreak\mathbb{E}\times\allowbreak\R_{\geq0}^{N}\times\allowbreak\mathbb{T}\allowbreak\times\R^{n_{y}}\times\{0,1\}^{N^2}\times\{0,1\}^{N^2}\times\R_{\geq0}^{N}\mid\allowbreak\forall i,m\in\Nc,((\ell_{i}^m+b_i^m=0)\lor(\ell_{i}^m+b_i^m=1\land\tau_{i}\in[0,\tau_\text{MAD}^{i}]))\big\}$.

\subsection{Flow dynamics}
We define the flow map $F:\mathbb{X}\times\R^{n_v}\to\R^{n_{x}}\times\allowbreak\R^{Nn_{y}}\times\allowbreak\{1\}^N\times\allowbreak\{1\}^N\times\allowbreak\{0\}^{n_y}\times\{0\}^{N^2}\times\{0\}^{N^2}\times\R^{N}$, as
\begin{multline}\label{eq:F(xi)}
    F(\xi,v):=\big(f(x,e,v),g(x,e,v),\mathbf{1}_{N},\mathbf{1}_{N},\mathbf{0}_{Nn_{y}},\\
    \mathbf{0}_{N^2},\mathbf{0}_{N^2},\Psi(\yh)-\varphi({\eta})\big),
\end{multline}
where the expression of $f$ follows from \eqref{eq:agenti} and \eqref{eq:networkerror}.
By combining \eqref{eq:holdingfunction} and \eqref{eq:networkerror}, we obtain that $g(x,e,v):=\allowbreak(g^1_1(x,e,v)),\allowbreak g^1_2(x,e,v)),\allowbreak \ldots,\allowbreak g^1_N(x,e,v),\allowbreak g^2_1(x,e,v),\allowbreak g^2_2(x,e,v),\allowbreak\ldots,\allowbreak g^N_N(x,e,v))$, where
\begin{equation}\label{eq:gij}
g_i^m(x,e,v):=-\delta_i(m)f_{y,i}(x,e,v)
\end{equation}
and
\begin{equation}\label{eq:fyi}
    f_{y,i}(x,e,v)=\frac{\partial h_i(x_i)}{\partial x_{i}}f_i(x,h^\text{in}_i(x)+e^\text{in}_i,v)
\end{equation}
with $\delta_i(m)$ given by $\delta_i(m)=1$ when $m\in\mathcal{V}^\text{out}_i$ and $\delta_i(m)=0$ otherwise, and $h^\text{in}_i(x):=(\delta_1(i)h_1(x_1),\delta_2(i)h_2(x_2),\ldots,\allowbreak\delta_N(i)h_N(x_N))$ with $h_i(x_i)$ in \eqref{eq:agenti}. The functions $\Psi(\yh):=(\Psi_1(\yhss{\text{in}}{1}),\allowbreak\Psi_2(\yhss{\text{in}}{2}),\allowbreak\ldots,\Psi_N(\yhss{\text{in}}{N}))$, $\varphi(\eta):=(\varphi_2(\eta_1),\varphi_2(\eta_2),\allowbreak\ldots,\allowbreak\varphi_N(\eta_N))$ with $\Psi_i:\R^{n_y}\to\R_{\geq 0}$ and $\varphi_i\in\mathcal{K}_\infty$, $i\in\Nc$, as in \eqref{eq:ETCTriggerDesign} are to be specified, as they are part of the ETM design.

The corresponding flow set is given by
\begin{equation}\label{eq:localflowset}
\mathcal{C}:=\mathbb{X}.
\end{equation}

\subsection{Jump dynamics}
To describe the jump map $G:\mathbb{X}\rightrightarrows\mathbb{X}$, we first define $\Gamma_i$ as a $N\times N$ matrix of which the $ii^\text{th}$ (diagonal) entry is equal to one and all other entries are zero, $\tilde{\Gamma}_i:=Z_i\otimes\Gamma_i$ with $Z_i:=I_N-\diag(\delta_i(1),\delta_i(2),\ldots,\allowbreak \delta_i(N))$, $\Gamma_{i,m}:=\Gamma_m\otimes\Gamma_i$ and $\Lambda_i:=\diag(\mathbf{0}_{n_{y,1}},\mathbf{0}_{n_{y,2}},\ldots,\mathbf{0}_{n_{y,i-1}},\mathbf{1}_{n_{y,i}},\mathbf{0}_{n_{y,i+1}},\ldots,\allowbreak\mathbf{0}_{n_{y,N}})$. Note that $\Gamma_{i,m}e = (0,0,\ldots,e^m_i,0,\ldots,0)$ and $\Lambda_iy=(0,0,\ldots,y_i,0,\ldots,0)$. Additionally, we define the function $\tilde{\ell}:\{0,1\}^N\to\{0,1\}$ as
\begin{equation}\label{eq:tildeli}
    \tilde{\ell}(\ell_i):=\begin{cases}0,\text{ when } \sum_{m\in\mathcal{V}^\text{out}_i}\ell^m_i=0\\
    1,\text{ when } \sum_{m\in\mathcal{V}^\text{out}_i}\ell^m_i>0\end{cases}
\end{equation}
with $\ell_i:=(\ell_i^1,\ell_i^2,\ldots,\ell_i^N)$. For the jump dynamics, we have to consider three types of jumps, namely, whether it corresponds to
\begin{enumerate}[a)]
	\item sampling instants of agent $\A_i$ with transmission (captured in $G_i^a$)
	\item sampling instants of agent $\A_i$ without transmission (captured in $G_i^b$)
	\item reception instants of information but before processing (captured in $G_{i,m}^c$).
\end{enumerate}
The jump map is thus given by $G(\xi)=\bigcup_{i\in\Nc}\bigcup_{m\in\mathcal{V}^\text{out}_i}G_{i,m}(\xi)$, where
\begin{equation}\label{eq:localjumpset}
    G_{i,m}(\xi):=\begin{cases}
    \{G_{i}^{a}(\xi)\},&\text{if } \xi\in\mathcal{D}_i\land\tilde{\ell}(\ell_i)=0~\land  \\
    &\hspace{1ex} \eta_i + \nu_i(y_i,\yhss{\text{out}}{i},\tau_i)<0\\
    \{G_{i}^{a}(\xi),G_{i}^{b}(\xi)\}, & \text{if } \xi\in\mathcal{D}_i\land\tilde{\ell}(\ell_{i})=0~\land\\
    &\hspace{1ex}\tau_i\geq\taumiet~\land\\&\hspace{1ex}\eta_i + \nu_i(y_i,\yhss{\text{out}}{i},\tau_i)=0\\
    \{G_{i}^{b}(\xi)\}, & \text{if } \tilde{\ell}(\ell_{i})=0 \land \xi\in\mathcal{D}_i~\land\\
    &\hspace{1ex} \eta_i + \nu_i(y_i,\yhss{\text{out}}{i},\tau_i)\geq0\\
    \{G_{i,m}^c(\xi)\}, & \text{if } \xi\in\mathcal{D}_i\land\ell_i^m=1\\
    \emptyset, & \text{if }\xi\notin \mathcal{D}_{i}
\end{cases}
\end{equation}
where
\begin{multline}
    G_{i}^{a}(\xi):=\big(x,e+\diag(b)(\Gamma_i\otimes I_{n_y})(\mathbf{1}_N\otimes(r-y)-e),\\
    (I_{N}-\Gamma_{i})\tau,(I_{N}-\Gamma_{i})\sigma,\Lambda_iy+(I_{n_y}-\Lambda_i)r,\\
    \ell+\tilde{\Gamma}_i\mathbf{1}_{N^2},(I_{Nn_y}-\Gamma_i\otimes I_{n_y})b,\Gamma_{i}\varrho_i(e^\text{out}_i)+{\eta}\big),
\end{multline}
that corresponds to case a) above,
\begin{multline}
    G_{i}^{b}(\xi):=\big(x,e+\diag(b)(\Gamma_i\otimes I_{n_y})(\mathbf{1}_N\otimes(r-y)-e),\\
    \hspace{4mm}\tau,(I_{N}-\Gamma_{i})\sigma,r,\ell,(I_{Nn_y}-\Gamma_i\otimes I_{n_y})b,\\
    \Gamma_{i}\nu_i(e^\text{out}_i,\tau_i)+{\eta}\big)
\end{multline}
for case b) above, and
\begin{multline}
    G_{i,m}^c(\xi):=\big(x,e,\tau,\sigma,r,\ell-\Gamma_{i,m}\mathbf{1}_{N^2},\\
    b+\Gamma_{i,m}\mathbf{1}_{N^2},{\eta}\big),
\end{multline}
for case c).

The functions $\varrho_i:\R^{n_{y,i}}\times\R^{Nn_{y,i}}\to\R_{\geq0}$ and $\nu_i:\R^{n_{y,i}}\times\R^{Nn_{y,i}}\times\R_{\geq0}\rightrightarrows\R_{\leq0}$, are to be designed, as part of the triggering mechanism, see \eqref{eq:ETCTriggerDesign}. When $b_i^m=1$ for some $m\in\mathcal{V}^\text{out}_i$, for both case a) and b) above, $b$ is used to update the local error $e^\text{in}_i$. Furthermore, after a sampling instant of agent $\A_i$, $b_i^m=0$ for all $m\in\Nc$.

To complete the description of the jump map we also have to define the sets $\mathcal{D}_{i}$, which we will do next.

The corresponding jump set $\mathcal{D}\subseteq\mathbb{X}$ is given by $\mathcal{D}:=\bigcup_{i\in\Nc}\mathcal{D}_{i}$, where
\begin{equation}
    \mathcal{D}_{i} :=\left\{ \xi\in\mathbb{X}\mid\sigma_i\geq d_i\lor \tilde{\ell}(\ell_{i})=1 \right\}\label{eq:localjumpset2}
\end{equation}
with $d_i$ from \eqref{eq:masp}.

Observe that a jump is enforced when $\ell_i^m=1 \lor b_i^m=1$ and $\tau_i=\tau_\text{MAD}^i$, or when $\sigma_i=\tau^i_\text{MASP}$. As such, the hybrid model complies with Assumption \ref{ass:smalldelay} and with \eqref{eq:masp}.

\subsection{Formal problem statement}\label{sec:problemStatement}
We can now state the problem of Section \ref{sec:objective} formally as follows.
\begin{problem}\label{problem}
    \label{Problem}Given the system $\mathcal{H}(\mathcal{C},F,\mathcal{D},G)$, provide design conditions for the time-constants $\tau_\text{MAD}^i,\tau_\text{MIET}^i\in\R_{>0}$ with $\tau_\text{MIET}^i\geq\tau_\text{MAD}^i$ and the functions $\Psi_i$, $\varsigma_i$, $\varrho_i$ and $\nu_i$ as in \eqref{eq:ETCCondition} and \eqref{eq:ETCTriggerDesign}, for $i\in\Nc$, such that, under Assumption 1, the system $\mathcal{H}$ is persistently flowing\footnote{Persistently flowing in the sense that maximal solutions have an unbounded domain in $t$-direction, see \cite{Goebel_Sanfelice_Teel_2012}.} and ($\tilde{s},\mathcal{S}$)-flow-dissipative for a set $\mathcal{S}\subset\mathbb{X}$, for a given supply rate $\tilde{s}:\mathbb{X}\times\R^{n_v}\to\R$ of the form
    \begin{equation}\label{eq:supplyrate}
        \tilde{s}(\xi,v):=s(x,e,v)-\varphi({\eta}),
    \end{equation}
    where $\xi\in\mathbb{X}$, $v\in\R^{n_v}$ and $\varphi:=(\varphi_1(\eta_1),\varphi_2(\eta_2),\ldots,\varphi_N(\eta_N))$ with $\varphi_i$ as in \eqref{eq:ETCTriggerDesign}.\endstatement
\end{problem}
As shown in, for example, \cite{Scha96,Teel_2010}, the use of dissipativity allows the consideration of various important system properties such as asymptotic stability, input-to-state stability, $\mathcal{L}_p$-stability with $p\in[1,\infty)$ and passivity, from a unified point of view.
Thus, the supply rate $\tilde{s}$ and the set $\mathcal{S}$ capture the desired stability, performance and robustness requirements.

\section{Design conditions}\label{sec:DesignConditions}
To ensure that the hybrid system has the desired performance and stability properties, the following conditions have to be satisfied.

\subsection{Growth of the network-induced error}
We require that the dynamics of the network-induced error satisfy the following property.
\begin{condition}\label{cond:W3}
    For each $i\in\Nc$, there exist functions $H_i:\R^{n_x}\times\R^{n_y}\times\R^{n_v}\to\R_{\geq 0}$ and constants $L_i\geq0$ such that for all $m\in\mathcal{V}^\text{out}_i$, $x\in\R^{n_x}$, $e\in\R^{Nn_y}$ and $v\in\R^{n_v}$,
    \begin{equation}\label{eq:bargibound}
        \vert f_{y,i}(x,e,v)\vert\leq H_{i}(x,e^\text{in}_i,v)+L_{i}\vert e^i_i\vert,
    \end{equation}
    where $f_{y,i}(x,e,v)=\frac{\partial h_i(x_i)}{\partial x_{i}}f_i(x,h^\text{in}_i(x)+e^\text{in}_i,v)$ with $\delta_i(m)$ given by $\delta_i(m)=1$ when $m\in\mathcal{V}^\text{out}_i$ and $\delta_i(m)=0$ otherwise, and $h^\text{in}_i(x):=(\delta_1(i)h_1(x_1),\delta_2(i)h_2(x_2),\ldots,\allowbreak\delta_N(i)h_N(x_N))$ with $h_i(x_i)$ in \eqref{eq:agenti}. \endstatement
\end{condition}
Inequality \eqref{eq:bargibound} is related to $\dot{y}_i$, which, due to the use of ZOH devices, is directly related to $\dot{e}^m_i$, as $\dot{e}^m_i=-\delta_i(m)\dot{y}_i$. In essence, Condition \ref{cond:W3} is a restriction on the growth of the network-induced error between transmissions. This condition is naturally satisfied for linear systems or when the vector fields $f_i$ are globally bounded and $h_i$ are globally Lipschitz.
\subsection{Lower-bounds on the Minimum Inter-event Times and Maximum Allowable Delays}
To obtain lower-bounds on the minimum inter-event times $\taumiet$ and the maximum allowable delay $\tau_\text{MAD}^i$ for each agent $\A_i$, $i\in\Nc$, we first characterize the influence of the transmission errors $e^\text{in}_i$ on the state $x$ and the desired stability/performance property by means of the following condition.
\begin{condition}\label{cond:V}
    There exist a locally Lipschitz function $V:\R^{n_x}\to\R_{\geq 0}$ and a non-empty closed set $\mathcal{X}\subseteq \R^{n_x}$, $\mathcal{K}_\infty$-functions $\underline{\alpha}_{V}\leq\overline{\alpha}_{V}$, continuous functions $\varsigma_{i}:\R^{Nn_y}\to\R_{\geq 0}$, constants $\mu_i,\gamma_{i}>0$, $i\in\Nc$, such that for all $x\in\R^{n_x}$
    \begin{equation}\label{eq:BoundsVdelayfree}
        \underline{\alpha}_{V}(\vert x\vert_{\mathcal{X}} )  \leq  V(x) \leq\overline{\alpha}_{V}(\vert x\vert_{\mathcal{X}} ),
    \end{equation}
    and for all $y\in\R^{n_y}$, $e\in\R^{Nn_y}$, $v\in\R^{n_v}$, and almost all $x\in\R^{n_x}$
    \begin{multline}\label{eq:Vflowcondition}
        \textstyle\hspace{-4mm}\left\langle \nabla V(x),f(x,e,v)\right\rangle  \leq s(x,e,v)-\sum_{i\in\Nc}\varsigma_{i}(\yhss{\text{in}}{i})\\
        \textstyle+\sum_{i\in\Nc}\big(-\mu_iN_iH_{i}^{2}(x,e^\text{in}_i,v)+\gamma_{i}^2 \vert e^\text{out}_i\vert^2\big)
    \end{multline}
    with $N_i$ the cardinality of $\mathcal{V}^\text{out}_i$, and $H_i$ from \eqref{eq:bargibound}.\endstatement
\end{condition}
Condition \ref{cond:V} constitutes an $\mathcal{L}_2$-gain condition from $|e_i^\text{out}|$ to $H_i$. In case of a linear system, this condition can always be verified if the system is controllable, for instance. In the absence of a network, i.e., when $e=0$, \eqref{eq:BoundsVdelayfree}-\eqref{eq:Vflowcondition} imply an $(s,\mathcal{X})$-dissipativity property. However, this property is affected by the network-induced error $e$, and our objective is to design the triggering mechanisms in such a way that the dissipativity property still holds for the networked system.

The constants $\gamma_{i}$ as in Condition \ref{cond:V} are used to determine $\tau_\text{MIET}^{i}$ and $\tau_\text{MAD}^{i}$, $i\in\Nc$, via the following condition.

\begin{condition}\label{cond:mietmad}
    Select $\tau_\text{max}^i>0$ and $\tau_\text{MAD}^{i}>0$, $i\in\Nc$,
    with $\tau_\text{max}^{i}\geq\tau_\text{MAD}^{i}+\tau_\text{MASP}^{i}$ such that
    \begin{align}
        \hspace{-1.75mm}\tilde{\gamma}_{i}(0)\phi_{0,i}(\tau_\text{max}^{i}) & \geq \lambda_{i}^{2}\tilde{\gamma}_{i}(1)\phi_{1,i}(0),\label{eq:tauimiet}\\
        \tilde{\gamma}_{i}(1)\phi_{1,i}(\tau_{i}) & \geq  \tilde{\gamma}_{i}(0)\phi_{0,i}(\tau_{i}),\text{ for all }\tau_i\in[0,\tau_\text{MAD}^{i}],\label{eq:tauimad}
    \end{align}
    where $\phi_{l,i}$, $l\in\{0,1\}$, evolves according to
    \begin{gather}
        \tfrac{d}{d\tau_i}\phi_{l,i}=-\left(
        2\tilde{L}_i(l)\phi_{l,i}+\tilde{\gamma_{i}}(l)\left(\tfrac{1}{\mu_i\epsilon_i}\phi_{l,i}^{2}+1\right)\right),\label{eq:diffeqphi01}
    \end{gather}
    for some fixed initial conditions $\phi_{l,i}(0)$, $l\in\{0,1\}$, that satisfy
    $\tilde{\gamma}_{i}(1)\phi_{1,i}(0)\geq\tilde{\gamma}_{i}(0)\phi_{0,i}(0)>\lambda_{i}^{2}\tilde{\gamma}_{i}(1)\phi_{1,i}(0)>0$,
    where, for each $i\in\Nc$ and $l\in\{0,1\}$, the functions $\tilde{L}_i:\{0,1\}\to\R_{\geq 0}$
    and $\tilde{\gamma}_{i}:\{0,1\}\to\R_{\geq 0}$ are given by
    \begin{align}
        \tilde{L}_i(l):=\lambda_i^{-l}\sqrt{N_i}L_i,\qquad
        \tilde{\gamma}_i(l):=\lambda_i^{-l}\gamma_{i}\label{eq:gamma01},
    \end{align}
    with $N_i$ the cardinality of $\mathcal{V}^\text{out}_i$ and where $\mu_i$ and $\gamma_{i}$ satisfy Condition \ref{cond:V}. The constants $\lambda_{i}\in(0,1)$ and $\epsilon_i\in(0,1]$, $i\in\Nc$, are tuning parameters. If the above conditions are satisfied, $\tau_\text{MIET}^i$ is defined as $\tau_\text{MIET}^i:=\tau_\text{max}^i-\tau_\text{MASP}^i$.     \endstatement
\end{condition}
Condition \ref{cond:mietmad} can always be ensured, as long as sufficiently fast sampling is available. In practice, based on the constants $\gamma_i$, $(\tau_\text{max}^i,\tau_\text{MAD}^i)$ curves can be generated to intuitively select appropriate values for $\lambda_i$, $\phi_{0,i}(0)$ and $\phi_{1,i}(0)$.

These conditions are similar to the conditions in \cite{Dolk_Borgers_Heemels_2017}, even though PETC or the effect of sampling is not considered. Indeed, in the continuous-time case, i.e., when $\tau_\text{MASP}$ approaches zero, $\tau_\text{MIET}^i=\tau_\text{max}^i$. This statement underlines that, if faster sampling is used, the continuous-time ETC behavior is recovered in the proposed setup.

\subsection{Event-triggering Mechanism Design}\label{sec:ETMdesign}
To facilitate the design of the ETM, consider the following condition.

\begin{condition}\label{cond:H}
    For $i\in\Nc$, consider the function $H_i$ satisfying Condition \ref{cond:V}. There exist locally Lipschitz functions $\underline{H}_i:\R^{n_y}\to\R_{\geq 0}$ that for all $e\in\R^{Nn_y}$, $v\in\R^{n_v}$ and $x\in\R^{n_x}$, satisfy $\underline{H}_i(\yhss{\text{in}}{i})\leq H_i(x,e^\text{in}_i,v)$.\endstatement
\end{condition}
The function $\Psi_{i}$ in \eqref{eq:etadot} is given by, for any $\yhss{\text{in}}{i}$,
\begin{equation}
    \label{eq:Psi2}\Psi_{i}(\yhss{\text{in}}{i}):=\varsigma_i(\yhss{\text{in}}{i})+(1-\epsilon_i)\mu_iN_i\underline{H}_i^2(\yhss{\text{in}}{i}),
\end{equation}
where $\varsigma_i$ and $\underline{H}_i$ come from Conditions \ref{cond:V} and \ref{cond:H}, respectively.
The function $\varrho_{i}$ is given by, for any $y_i\in\R^{n_{y,i}}$ and $\yhss{\text{out}}{i}\in\R^{Nn_{y,i}}$
\begin{equation}\label{eq:etai0}
    \varrho_i(y_{i},\yhss{\text{out}}{i}):=\varepsilon_\varrho\vert e^\text{out}_i\vert^2
\end{equation}
with $\varepsilon_\varrho:=\left(\tilde{\gamma}_{i}(0)\phi_{0,i}(\tau_\text{MIET}^i+\sigma_i)-\tilde{\gamma}_{i}(1)\phi_{1,i}(0)\lambda_i^2\right)$ where $\phi_{l,i}$, $l\in\{0,1\}$ as in \eqref{eq:diffeqphi01} and $\tilde{\gamma_i}:\{0,1\}\to\R$ is as in \eqref{eq:gamma01}.
Finally, the function $\nu_{i}:\R^{n_y}\times \R_{\geq 0}\rightrightarrows\R_{\leq0}$ is defined as
\begin{equation}\label{eq:etai1}
    \nu_i(y_i,\yhss{\text{out}}{i},\tau_i):= (1-\omega_i(\tau_i))\tilde{\gamma}_i(0)\varepsilon_\nu\vert e^\text{out}_i\vert^2,
\end{equation}
where $\varepsilon_\nu:=-\left(\phi_{0,i}(\tau_\text{MIET}^i)-\phi_{0,i}(\tau_\text{MIET}^i+\sigma_i)\right)$ and
\begin{equation}
    \omega_{i}(\tau_i) \in \begin{cases}
    \{1\}, & \text{for }\tau_i\in[0,\taumiet) \\
    [0,1], & \text{for }\tau_i=\taumiet,\\
    \{0\}, & \text{for }\tau_i >\taumiet.
\end{cases}\label{eq:Omegai}
\end{equation}
Note that $\nu_i$ is single-valued for all $\tau_i\neq\tau_\text{MIET}^i$, and set-valued for $\tau_i=\tau_\text{MIET}^i$. Since the proof holds for \emph{all} points in the set-valued map, in essence we can use the discontinuous version ($\omega_i(\tau_i)=1$ if $\tau_i\leq\taumiet$ and 0 otherwise) to verify the condition in \eqref{eq:ETCCondition}. Hence, the fact that $\nu_i$ is set-valued is not an issue with respect to \eqref{eq:ETCCondition}.

In the proposed setup, each agent needs to know (and compute) constants $\varepsilon_\varrho$ and $\varepsilon_\nu$ on-line due to the dependence on $\sigma_i$. If, from a computational standpoint, this is infeasible, a conservative upper-bound can be used by taking $\varepsilon_\varrho:=\left(\tilde{\gamma}_{i}(0)\phi_{0,i}(\tau_\text{max}^i)-\tilde{\gamma}_{i}(1)\phi_{1,i}(0)\lambda_i^2\right)$ and $\varepsilon_\nu:=\left(\phi_{0,i}(\tau_\text{max}^i)-\phi_{0,i}(\tau_\text{MIET}^i)\right)$, which can be computed a priori.

We emphasize that the local ETMs as described by \eqref{eq:ETCCondition}, \eqref{eq:ETCTriggerDesign}, \eqref{eq:Psi2}, \eqref{eq:etai0} and \eqref{eq:etai1}, can operate fully asynchronously in the sense that clock synchronization or acknowledgment signals are not required.

\subsection{Main result}\label{sec:MainResult}
Given the ETM design and the corresponding hybrid model presented above, we can now state the following result. Its proof is provided in the appendix.

\begin{thm}\label{thm:Theorem}
    Consider the system $\mathcal{H}(\mathcal{C},F,\mathcal{D},G)$ where ${\Psi}_i$, $\varrho_i$ and $\nu_i$ are given by \eqref{eq:Psi2}, \eqref{eq:etai0} and \eqref{eq:etai1}, respectively. Moreover, suppose that Conditions \ref{cond:W3}-\ref{cond:H} hold. Then the MAS described by ${\mathcal{H}}$ is $(\tilde{s}, \mathcal{S})$-flow-dissipative with the supply rate $\tilde{s}:\mathbb{X}\times\R^{n_v}\to\R$ as given in \eqref{eq:supplyrate} and $\mathcal{S}=\allowbreak\{\xi \in \mathbb{X} \mid x\in\mathcal{X}, \ \allowbreak e=0, \ \allowbreak \eta=0\}.$ In addition, if {there are no finite escape times during the flow\footnote{The absence of finite escape times during flow is meant here in the sense that case (b) in Prop.~2.10 in \cite{Goebel_Sanfelice_Teel_2012} cannot occur.}, then the system $\mathcal{H}$ is persistently flowing.}\endstatement
\end{thm}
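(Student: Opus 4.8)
The plan is to produce a single locally Lipschitz storage function $U:\X\to\R_{\geq0}$, verify the three items of Definition~\ref{defn:Vdiss} for the supply rate $\tilde s$ and the set $\mathcal S$ of the statement, and then treat persistence of flow separately. Following the delayed-network Lyapunov constructions of \cite{Dolk_Borgers_Heemels_2017}, the candidate I would use is
\begin{equation*}
    U(\xi):=V(x)+\sum_{i\in\Nc}\tilde{\gamma}_i\!\left(\tilde{\ell}(\ell_i)\right)\phi_{\tilde{\ell}(\ell_i),i}(\tau_i)\,|e^\text{out}_i|^2+\sum_{i\in\Nc}\eta_i,
\end{equation*}
with $V$ from Condition~\ref{cond:V}, $\phi_{l,i}$ the solution of \eqref{eq:diffeqphi01}, $\tilde{\gamma}_i$ as in \eqref{eq:gamma01}, and the active mode $l=\tilde{\ell}(\ell_i)\in\{0,1\}$ selecting $\phi_{1,i}$ whenever a packet broadcast by $\A_i$ is still pending at some out-neighbour and $\phi_{0,i}$ otherwise. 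The sandwich bound (item~1) follows from \eqref{eq:BoundsVdelayfree}, from the fact that \eqref{eq:diffeqphi01} together with the initial-condition inequalities of Condition~\ref{cond:mietmad} keep each $\tilde{\gamma}_i(l)\phi_{l,i}(\tau_i)$ in a fixed positive interval on the admissible timer range ($\tau_i\in[0,\tau^i_\text{MAD}]$ for $l=1$, $\tau_i\in\R_{\geq0}$ for $l=0$), and from $\mathcal S=\{x\in\mathcal X,\,e=0,\,\eta=0\}$ being exactly the common zero set of the three groups of terms; the class-$\mathcal K_\infty$ bounds $\underline{\alpha},\overline{\alpha}$ are then assembled from $\underline{\alpha}_V,\overline{\alpha}_V$ and these uniform $\phi$-bounds, using $|e|^2=\sum_i|e^\text{out}_i|^2$.

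For the flow condition (item~2) I would note that $\ell,b$ are constant along flow while $\dot\tau_i=\dot\sigma_i=1$, so on each fixed mode $U$ is continuously differentiable in the flowing variables and $U^\circ(\xi;f)=\langle\nabla V(x),f\rangle+\sum_i\tilde{\gamma}_i(l)\big(\tfrac{d}{d\tau_i}\phi_{l,i}\,|e^\text{out}_i|^2+2\phi_{l,i}\langle e^\text{out}_i,\dot e^\text{out}_i\rangle\big)+\sum_i(\Psi_i-\varphi_i)$. Using Condition~\ref{cond:W3} one bounds $|\dot e^\text{out}_i|\leq\sqrt{N_i}\,|f_{y,i}|\leq\sqrt{N_i}(H_i+L_i|e^\text{out}_i|)$, applies Young's inequality with weight $\mu_i\epsilon_i$ to the $H_i|e^\text{out}_i|$ cross-term, and uses $\tilde L_i(l)\geq\sqrt{N_i}L_i$ and $\tilde{\gamma}_i(l)\geq\gamma_i$ (from \eqref{eq:gamma01} and $\lambda_i\in(0,1)$) to dominate the $L_i$ cross-term. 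Substituting the Riccati dynamics \eqref{eq:diffeqphi01} is designed so that the quadratic-in-$|e^\text{out}_i|$ and $\phi^2$ terms cancel, bounding the flow derivative of the $i$-th error term by $\mu_i\epsilon_iN_iH_i^2-\gamma_i^2|e^\text{out}_i|^2$. Adding the dissipation inequality \eqref{eq:Vflowcondition} cancels the $\gamma_i^2|e^\text{out}_i|^2$ terms, and the choice of $\Psi_i$ in \eqref{eq:Psi2} together with $\underline{H}_i\leq H_i$ from Condition~\ref{cond:H} absorbs the remaining $\mu_iN_iH_i^2$ and $\varsigma_i$ contributions, so that everything collapses to $U^\circ(\xi;f)\leq s(x,e,v)-\varphi(\eta)=\tilde s(\xi,v)$.

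The jump condition (item~3) is where I expect the real difficulty, and it must be checked separately for the three maps $G_i^a,G_i^b,G^c_{i,m}$. The subtlety is that, because of asynchrony and delay, agent $\A_i$'s out-error $e^\text{out}_i$ is reset only \emph{piecemeal}, at the individual sampling instants of its out-neighbours (through the $\diag(b)$ term), rather than at $\A_i$'s own transmission; hence the per-agent terms of $U$ are coupled across different agents' jumps and the mode variable $\tilde{\ell}(\ell_i)$ changes only when the \emph{last} pending packet is processed. The three design ingredients are matched to the three transitions: inequality \eqref{eq:tauimad}, i.e.\ $\tilde{\gamma}_i(1)\phi_{1,i}\geq\tilde{\gamma}_i(0)\phi_{0,i}$ on $[0,\tau^i_\text{MAD}]$, guarantees that the clock-weight does not increase when the final reception switches $\tilde{\ell}(\ell_i)$ from $1$ to $0$ in $G^c_{i,m}$; inequality \eqref{eq:tauimiet} together with the $\lambda_i^2$-margin and the precise value of $\varepsilon_\varrho$ in \eqref{eq:etai0} calibrate the deposit $\varrho_i$ into $\eta_i$ against the clock-weight transition at a transmission $G_i^a$ so that the net change is nonpositive; and the choice of $\varepsilon_\nu$ and $\omega_i$ in \eqref{eq:etai1}--\eqref{eq:Omegai} accounts for the mismatch between the continuous and the sampled decay of $\phi_{0,i}$ over the sampling offset $\sigma_i$ at a no-transmission sample $G_i^b$, with $\omega_i$ rendering $\nu_i$ inactive for $\tau_i<\taumiet$. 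The main obstacle is to organise this bookkeeping uniformly over all agents and over the full admissible ranges $\tau_i\in[0,\tau^i_\text{MAD}]$, $\sigma_i\in[0,\tau^i_\text{MASP}]$ and both mode values, and to do so for \emph{every} selection of the set-valued $\nu_i$ at $\tau_i=\taumiet$, which is exactly why the inequalities in Condition~\ref{cond:mietmad} are stated with strict margins.

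Combining items~1--3 gives $(\tilde s,\mathcal S)$-flow-dissipativity. For persistence of flow I would argue that solutions remain in $\X=\mathcal C\cup\mathcal D$ by construction of $G$, that $\tau^i_\text{MIET}>0$ and the inter-sampling bound \eqref{eq:masp} (with $d_i>0$) force a uniform positive amount of flow between any two time-advancing jumps (the at most $N_i$ reception jumps $G^c_{i,m}$ per transmission being instantaneous and finite in number), which rules out Zeno behaviour, and that under the assumed absence of finite escape times during flow the only remaining obstruction in Prop.~2.10 of \cite{Goebel_Sanfelice_Teel_2012} is excluded; hence maximal solutions have unbounded $t$-domain, i.e.\ $\mathcal H$ is persistently flowing.
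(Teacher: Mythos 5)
Your high-level skeleton matches the paper's proof — a storage function of the form $V(x)+\sum_{i\in\Nc}\eta_i+\sum_{i\in\Nc}(\text{clock-weighted error term})$, flow analysis via Young's inequality and the Riccati equation \eqref{eq:diffeqphi01}, a case-by-case jump analysis, and a tangent-cone/dwell-time argument for persistent flowing — and you correctly locate the crux in the piecemeal asynchronous resets of $e^\text{out}_i$. But your concrete candidate $U$ fails, in two places. First, positivity: by \eqref{eq:diffeqphi01} we have $\tfrac{d}{d\tau_i}\phi_{0,i}\leq-\tilde{\gamma}_i(0)<0$ whenever $\phi_{0,i}\geq0$, so $\phi_{0,i}$ reaches zero in finite time; Condition \ref{cond:mietmad} only guarantees positivity of $\phi_{0,i}$ up to $\tau_\text{max}^i$, not on $\R_{\geq0}$. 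Since $\X$ contains points with arbitrarily large $\tau_i$ and all flags zero (the trigger need not fire once $\tau_i\geq\taumiet$), your weight $\tilde{\gamma}_i(0)\phi_{0,i}(\tau_i)$ becomes negative on $\X$, so your claim that it stays ``in a fixed positive interval'' for $\tau_i\in\R_{\geq0}$ is false and item 1 of Definition \ref{defn:Vdiss} collapses. This is exactly why the paper replaces $\phi_{l,i}(\tau_i)$ by the saturated, $\sigma_i$-dependent weight $\bar{\phi}_{l,i}(\tau_i,\sigma_i)$ of \eqref{eq:barphi}, frozen at $\phi_{0,i}(\taumiet+\sigma_i)$ once $\tau_i-\sigma_i>\taumiet$; that freezing is also what makes the $\nu_i$-compensation at $\sigma_i$-resets both necessary and sufficient.

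Second, and fatally, the transmission jump $G^a_i$. At $t^i_k$ the vector $e^\text{out}_i$ does not change at all: the transmitted value is only stored in $r_i$ and the flags $\ell^m_i$ are raised, while the components $e^m_i$ held at the out-neighbours are reset only later, one by one, at \emph{their} sampling instants. Hence with your $U$ the weight on the unchanged $|e^\text{out}_i|^2$ jumps from $\tilde{\gamma}_i(0)\phi_{0,i}(\tau_i)$ up to $\tilde{\gamma}_i(1)\phi_{1,i}(0)$ — an increase, by the ordering of initial conditions in Condition \ref{cond:mietmad} and strict monotonicity of $\phi_{0,i}$ — and on top of that $\varrho_i\geq0$ is deposited into $\eta_i$, so $U(\xi^+)-U(\xi)>0$ whenever $e^\text{out}_i\neq0$: item 3 fails. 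The definition \eqref{eq:etai0} of $\varrho_i$ is calibrated against a $\lambda_i^2$-contraction of the error term at transmission, which $|e^\text{out}_i|^2$ cannot provide since it is literally unchanged there. The missing idea is the paper's function $\tilde{W}_i$ in \eqref{eq:Wtilde}: built from the memory variable $r_i$ and the flags through $s_i$ in \eqref{eq:barsi}, it pre-applies the pending resets, so that it \emph{does} contract by $\lambda_i$ at transmission (inequality \eqref{eq:WleqW3}) and, thanks to the max over subsets $\mathcal{R}\subset\mathcal{R}_i(\ell_i,b_i)$, is non-increasing when out-neighbours later process buffered packets (inequality \eqref{eq:WleqW2}) — whereas plain $|e^\text{out}_i|$ can grow at those processing jumps, a second uncontrolled increase in your $U$. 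For the same reason the mode must be $p(\ell_i,b_i)$ (pending reception \emph{or} processing), not your $\tilde{\ell}(\ell_i)$, which drops to $0$ at the last reception while buffered updates are still outstanding. These constructions, and Lemmas \ref{lemma:Wjump}--\ref{lemma:etajump} built on them, are the substance of the paper's proof, not bookkeeping that can be deferred: no calibration of $\varrho_i$, $\nu_i$ within your ansatz can repair a jump at which the weight increases and the error it multiplies is unchanged.
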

Theorem \ref{thm:Theorem} implies that the desired stability and/or performance properties, guaranteed by the local controllers in absence of the network, are preserved by the original dissipativity property when the network is taken into account.

\section{Case study}\label{sec:casestudy}
We apply the results to the single-integrator consensus problem, where we have a multi-agent system with $N\in\N_{>0}$ agents. All agents have state $x_i\in\mathbb{R}$, $i\in\mathcal{N}$, whose dynamics evolve according to $\dot{x}_i=u_i$ with $u_i\in\R$ the control input. The output of the system is the state $x_i$, i.e., $y_i=x_i$. We assume that the graph $\mathcal{G}(\mathcal{V},\mathcal{E})$ with Laplacian matrix $L$ is connected and undirected, i.e., $L^\top=L$. The control objective is for the states of all agents to asymptotically converge, i.e., $\lim_{t\to\infty}|x_i(t)-x_m(t)|=0$ for all $i,m\in\mathcal{N}$. To achieve consensus, we implement the control law
\begin{equation}\label{eq:consensusui}
    \textstyle u_i=-\sum_{m\in\mathcal{V}_i^\text{in}}(x_i+e_i^i-x_m-e_m^i).
\end{equation}
We consider the Lyapunov candidate $V(x)=x^\top L x$ where $x:=(x_1,x_2,\ldots,x_N)$. According to \cite{Dolk_Postoyan_Heemels_2019}, the derivative of this Lyapunov function can be upper-bounded as $\left\langle \nabla V(x),-Lx-Le\right\rangle\allowbreak\leq\allowbreak\sum_{i\in\Nc}\big(-d_iz_i^2-c_iu_i^2+\allowbreak(\gamma_i^2-\alpha_i)|e^i_i|^2\big)$ with $d_i:=\delta(1-aN_i)$, $c_i:=(1-\delta)(1-aN_i)$ and $\gamma_i=\sqrt{{a^{-1}N_i+\alpha_i}}$, and where $\delta\in(0,1)$,  $a\in(0,\frac{1}{N_i})$ and $\alpha_i>0$ are tuning parameters. The theorem below shows the exact expressions for all the required conditions. Its proof is omitted for space reasons.
\begin{thm}\label{thm:consensus}
    The system with local dynamics $\dot{x}_i=u_i$ and local controller \eqref{eq:consensusui} satisfies Conditions \ref{cond:W3}, \ref{cond:V} and \ref{cond:H} with $H_i=|u_i|$, $L_i=0$, $s(x,e)=\sum_{i\in\Nc}\left(-d_iz_i^2-\mu_ie_i^2\right)$, $\mathcal{X}=\{x\in\R^{N}\mid x_1=x_2=\ldots=x_N\}$, $\varsigma_i=0$, $\mu_i=c_i\frac{1}{N_i}$, $\gamma_{i}=\sqrt{{a}^{-1}N_i+\alpha_i}$, and $\underline{H}_i=|u_i|$. \endstatement
\end{thm}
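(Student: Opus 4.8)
The plan is to verify Conditions~\ref{cond:W3}, \ref{cond:V} and~\ref{cond:H} directly with the proposed data, using the dissipation estimate of \cite{Dolk_Postoyan_Heemels_2019} as the starting point for the hardest part and reshaping it into the precise form~\eqref{eq:Vflowcondition}. I would begin with Condition~\ref{cond:W3}, which is immediate: here $y_i=x_i$, so $h_i$ is the identity, $\partial h_i/\partial x_i=1$, and $f_{y,i}(x,e,v)=f_i=u_i$. Thus $\vert f_{y,i}\vert=\vert u_i\vert$ and the choice $H_i=\vert u_i\vert$, $L_i=0$ satisfies~\eqref{eq:bargibound} with equality; the only thing to check is that $u_i$ in~\eqref{eq:consensusui} depends solely on $x$ and on the entries $e_i^i$ and $e_m^i$, $m\in\mathcal V_i^\text{in}$, of $e_i^\text{in}$, so that $H_i$ is an admissible function of $(x,e_i^\text{in},v)$.

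Condition~\ref{cond:H} is equally direct. I would rewrite~\eqref{eq:consensusui} purely in terms of the locally held estimates, $u_i=-\sum_{m\in\mathcal V_i^\text{in}}(\yhss{i}{i}-\yhss{i}{m})$, which exhibits $\vert u_i\vert$ as a locally Lipschitz function of $\yhss{\text{in}}{i}$ alone. Setting $\underline H_i(\yhss{\text{in}}{i}):=\vert u_i\vert$ then gives $\underline H_i(\yhss{\text{in}}{i})=H_i(x,e_i^\text{in},v)$, so the required inequality $\underline H_i\leq H_i$ holds trivially.

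The substance lies in Condition~\ref{cond:V}. For the sandwich~\eqref{eq:BoundsVdelayfree} I would exploit that the Laplacian $L=L^\top\geq0$ of a connected undirected graph has $\ker L=\mathrm{span}\{\mathbf{1}_N\}=\mathcal X$; writing $P$ for the orthogonal projector onto $\mathbf{1}_N^\perp$, one has $V(x)=x^\top Lx=(Px)^\top L(Px)$ and $\vert Px\vert=\vert x\vert_\mathcal X$, so the spectral bound $\lambda_2(L)\vert x\vert_\mathcal X^2\leq V(x)\leq\lambda_N(L)\vert x\vert_\mathcal X^2$ supplies the $\mathcal K_\infty$ functions $\underline\alpha_V(r)=\lambda_2(L)r^2$ and $\overline\alpha_V(r)=\lambda_N(L)r^2$. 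For the flow inequality I would take the closed-loop field $f(x,e,v)=-Lx-Le$ and invoke the estimate $\langle\nabla V(x),-Lx-Le\rangle\leq\sum_{i\in\Nc}\bigl(-d_iz_i^2-c_iu_i^2+(\gamma_i^2-\alpha_i)\vert e_i^i\vert^2\bigr)$ of \cite{Dolk_Postoyan_Heemels_2019}, and then assign terms: $-d_iz_i^2$ together with the transmission-error penalty form the supply rate $s(x,e)=\sum_i(-d_iz_i^2-\mu_ie_i^2)$, while the identity $\mu_iN_i=c_i$ combined with $H_i=\vert u_i\vert$ turns $-c_iu_i^2$ into exactly the term $-\mu_iN_iH_i^2$ of~\eqref{eq:Vflowcondition}, and $\varsigma_i=0$.

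The main obstacle, and the step I would treat most carefully, is reconciling the single scalar penalty $(\gamma_i^2-\alpha_i)\vert e_i^i\vert^2$ of the borrowed estimate with the aggregated output-error term $\gamma_i^2\vert e_i^\text{out}\vert^2$ required by~\eqref{eq:Vflowcondition}. I would use the undirected, broadcast structure: agent $\A_i$ sends a single value of $y_i$ to every $m\in\mathcal V_i^\text{out}$, so each nonzero component of $e_i^\text{out}=(e_i^1,\ldots,e_i^N)$ reduces to the commonly held error and hence $\vert e_i^\text{out}\vert^2\geq\vert e_i^i\vert^2$. Since $\gamma_i=\sqrt{a^{-1}N_i+\alpha_i}$ gives $\gamma_i^2-\alpha_i=a^{-1}N_i$, one obtains $\gamma_i^2\vert e_i^\text{out}\vert^2-\mu_ie_i^2\geq(\gamma_i^2-\mu_i)\vert e_i^i\vert^2\geq(\gamma_i^2-\alpha_i)\vert e_i^i\vert^2$ as soon as $\mu_i\leq\alpha_i$, a condition that the free tuning of $\alpha_i>0$ always permits. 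Substituting this domination into the above estimate reproduces~\eqref{eq:Vflowcondition} verbatim, establishing Condition~\ref{cond:V} and completing the verification.
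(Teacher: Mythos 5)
The paper itself omits the proof of Theorem~\ref{thm:consensus} (``omitted for space reasons''), so your attempt can only be judged on its own merits. Your verification of Condition~\ref{cond:W3} (here $h_i$ is the identity, so $f_{y,i}=u_i$, and $H_i=|u_i|$, $L_i=0$ works), of Condition~\ref{cond:H} (rewriting $u_i=-\sum_{m\in\mathcal{V}_i^\text{in}}(\yhss{i}{i}-\yhss{i}{m})$ so that $|u_i|$ is a function of $\yhss{\text{in}}{i}$ alone), and of the bounds \eqref{eq:BoundsVdelayfree} via the spectral bounds of $L$ on $\mathbf{1}_N^\perp$ is correct. The flow inequality \eqref{eq:Vflowcondition} is where your argument has two genuine gaps.

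The structural gap: Condition~\ref{cond:V} requires \eqref{eq:Vflowcondition} for \emph{all} $e$, whose components $e_i^m$ (the copy of $y_i$ held by receiver $\A_m$) are independent free variables --- this is precisely the situation created by the unknown per-receiver delays the whole framework is built around. Your step ``take the closed-loop field $f(x,e,v)=-Lx-Le$'' together with the claim that ``each nonzero component of $e_i^\text{out}$ reduces to the commonly held error'' restricts the verification to the subspace where sender and all receivers hold identical estimates, $e_i^m=e_i^i$ for all $m\in\mathcal{V}_i^\text{out}$; an inequality checked on that subspace does not extend to all of $\R^{Nn_y}$. (The inequality $|e_i^\text{out}|^2\geq(e_i^i)^2$ that you extract is in fact trivially true, since $e_i^i$ is one component of $e_i^\text{out}$; the broadcast argument is not the issue.) The real problem is that the borrowed estimate itself is false off that subspace: take two connected agents with $x_1=x_2$ (so $z_1=z_2=0$), $e_1^1=e_2^2=0$ and $e_2^1=e_1^2=1$. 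Then $u_1=u_2=1$, the left-hand side equals $2z_1u_1+2z_2u_2=0$, while the right-hand side of the borrowed bound equals $-c_1u_1^2-c_2u_2^2=-(c_1+c_2)<0$. Condition~\ref{cond:V} itself survives at this point, but only thanks to the terms $\gamma_i^2|e_i^\text{out}|^2$, which carry the receivers' errors $e_i^m$, $m\neq i$ --- terms your route never produces. A correct proof must redo the Lyapunov computation for the heterogeneous closed loop $u_i=-z_i-\sum_{m\in\mathcal{V}_i^\text{in}}(e_i^i-e_m^i)$, apply Young's inequality with the parameter $a$ to the cross terms, and re-index the receiver-error sums via $\sum_{i\in\Nc}\sum_{m\in\mathcal{V}_i^\text{in}}(e_m^i)^2=\sum_{i\in\Nc}\sum_{m\in\mathcal{V}_i^\text{out}}(e_i^m)^2\leq\sum_{i\in\Nc}|e_i^\text{out}|^2$; that is how the $\gamma_i^2|e_i^\text{out}|^2$ term of \eqref{eq:Vflowcondition} arises.

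The quantitative gap: even granting the borrowed estimate, your final bridge needs $\mu_i\leq\alpha_i$. This is not a hypothesis of the theorem, and it fails in the paper's own case study: with $\delta=\alpha_i=0.05$ and $a=0.1$ one gets $c_i=0.76$ and $\mu_i=c_i/N_i=0.38$ for $N_i=2$ (and $\mu_i\approx0.22$ for $N_i=3$), both far above $\alpha_i=0.05$. The fallback that ``the free tuning of $\alpha_i>0$ always permits'' $\mu_i\leq\alpha_i$ is not available to you: $\alpha_i$ is part of the data of the statement (it fixes $\gamma_i$, which then feeds Condition~\ref{cond:mietmad} and the ETM design), so the theorem must be established for every admissible choice of $(\delta,a,\alpha_i)$, in particular those with $\mu_i>\alpha_i$. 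As written, your argument proves a strictly weaker statement than the one claimed, and one that does not cover the regime in which the paper actually applies the theorem.
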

Constants $\tau_\text{max}^i$ and $\tau_\text{MAD}^i$ can be generated via an intuitive procedure, as described in \cite{Dolk_Postoyan_Heemels_2019}. Theorem \ref{thm:consensus} implies that asymptotic consensus is achieved with the proposed control configurations in this paper.

We simulate the same system as \cite{Dolk_Postoyan_Heemels_2019} with non-uniform and time-varying transmission delays. However, in our case we implement our periodic event-triggered control framework instead of continuous-event triggered control as in \cite{Dolk_Postoyan_Heemels_2019}. The system has $N=8$ agents which are connected as described by a graph $\mathcal{G}$ with undirected edges $(1,2)$, $(1,8)$, $(2,3)$, $(2,7)$, $(3,4)$, $(3,6)$, $(4,5)$, $(5,6)$, $(5,8)$ and $(7,8)$. We use the parameters $\delta=\alpha_i=0.05$, $a=0.1$ and $\epsilon_i=0.5$ for all $i\in\mathcal{N}$. Given these tuning parameters, we obtain $\gamma_i=4.478$ and $c_i=0.76$ for agents $i\in\mathcal{N}$ with two neighbors (i.e., $N_i=2$, thus agents $P_1$, $P_4$, $P_6$ and $P_7$) and $\gamma_i=5.482$ and $c_i=0.665$ for agents $i\in\mathcal{N}$ with three neighbors (i.e., $N_i=3$, thus agents $P_2$, $P_3$, $P_5$ and $P_8$). The function $\varphi_i(\eta_i)$ is designed as $\varphi_i(\eta_i)=-\epsilon_\eta(\eta_i)$ with $\epsilon_{\eta}=0.05$. We select $\lambda_i=0.2$ for all agents, and pick $\phi_{0,i}(0)=5$ and $\phi_{1,i}=2$.
For these values, we obtain $(\tau^i_\text{max},\tau_\text{MAD}^i)=(0.12,0.016)$ for agents $i\in\mathcal{N}$ for which $N_i=2$ and $(\tau^i_\text{max},\tau_\text{MAD}^i)=(0.09,0.012)$ for agents $i\in\mathcal{N}$ for which $N_i=3$. We select $\tau_\text{MIET}^i=0.07$ ($\tau_\text{MIET}^i=0.05$) for all agents for which $N_i=2$ ($N_i=3$), respectively,  $\tau_\text{MASP}^i=10^{-2}$ and $d_i=10^{-3}$ for all $i\in\mathcal{N}$. At each sampling moment $s_n^i$, the next sampling moment is scheduled randomly such that $s_{n+1}^i\in[s_n^i+d_i,s_n^i+\tau_\text{MASP}^i]$ for each $i\in\Nc$, hence the sampling of each agent is aperiodic, asynchronous and independent of the other agents. The state evolution and inter-event times are depicted in Fig. \ref{fig:simulation}, confirming our main theorem.

\begin{figure}
    \centering
    \input{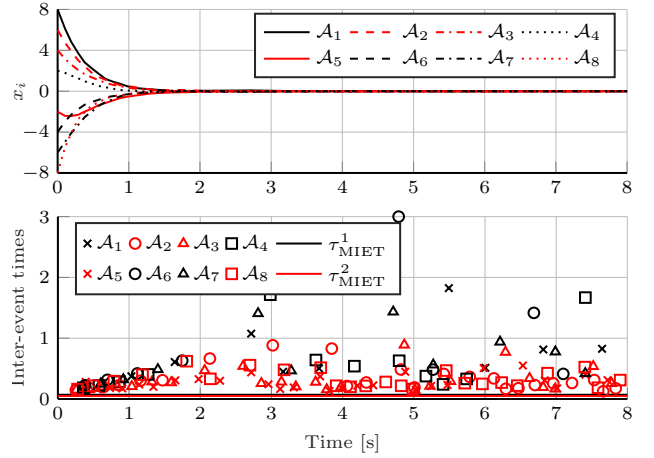}
    \caption{\small States and inter-event times for the example in Section \ref{sec:casestudy}.}
    \label{fig:simulation}
\end{figure}

\section{Conclusion}
We presented a framework for the design of Zeno-free dynamic periodic triggering mechanisms for the control of nonlinear multi-agent systems via a packet-based network. The method can cope with non-uniform and time-varying delays. By ensuring that the conditions of the local trigger mechanisms only have to be verified at the local (asynchronous) sampling times, the proposed framework is suitable for implementation on digital platforms. With a dissipativity property, the framework can handle several relevant stability and performance properties such as asymptotic (set) stability, input-to-state stability, $\mathcal{L}_p$-stability with $p\in[1,\infty)$ and consensus, from a unified point of view. Thereby, capturing a wide range of systems to which these techniques can be applied.

\bibliography{ifacconf}

\appendix
\section{Proof of Theorem \ref{thm:Theorem}.}
To verify the dissipativity properties of the digitally implemented MAS with respect to the supply rate $\tilde{s}(\xi,w)$, we aim to construct a storage function that satisfies Definition \ref{defn:Vdiss}.

For the clarity of exposition, the proof is composed of several steps. Firstly, we present a candidate storage function. Next, we prove several intermediate results. Finally, we show that the overall candidate storage function indeed satisfies Definition \ref{defn:Vdiss}.

\textbf{Step I.} \emph{Candidate storage function.}
Consider the following candidate storage function
\begin{multline}\label{eq:LyapFunction2}
U(\xi)=V(x)+\sum_{i\in\Nc}\eta_{i}\\+\sum_{i\in\Nc}\tilde{\gamma}_{i}(p(\ell_{i},b_i))\bar{\phi}_{p(\ell_{i},b_i),i}(\tau_i,\sigma_i)\tilde{W}_{i}^{2}(\ell_{i},b_i,y_i,e^\text{out}_i,r_{i}),
\end{multline}
for any $\xi\in\mathbb{X}$ with $\ell_i=(\ell_i^1,\ell_i^2,\ldots \ell_i^N)\in\{0,1\}^N$, $b_i=(b_i^1,b_i^2,\ldots,b_i^N)\in\{0,1\}^N$ and the function $p:\{0,1\}^N\times\{0,1\}^N\to\{0,1\}$ as
\begin{equation}
    p(\ell_i,b_i):=\begin{cases}
    0,&\text{when }\sum_{m\in\mathcal{V}^\text{out}_i}\ell_i^m+b_i^m=0,\\
    1,&\text{when }\sum_{m\in\mathcal{V}^\text{out}_i}\ell_i^m+b_i^m>0,
    \end{cases}
\end{equation}
and where the function $V:\R^{n_x}\to\R_{\geq0}$ satisfies \eqref{eq:Vflowcondition}, the function $\bar{\phi}_{l,i}:\R_{\geq 0}\times\mathbb{T}_i\to\R_{\geq 0}$, $l\in\{0,1\}$, $i\in\Nc$ is given by
\begin{equation}\label{eq:barphi}
    \bar{\phi}_{l,i}(\tau_i,\sigma_i):=\begin{cases}
    \phi_{l,i}(\tau_i),&\text{ when }\tau_i-\sigma_i\leq\tau_\text{MIET}^i,\\
    \phi_{l,i}(\tau_\text{MIET}^i+\sigma_i),&\text{ when }\tau_i-\sigma_i>\tau_\text{MIET}^i,
\end{cases}
\end{equation}
where $\phi_{l,i}$ evolves as in \eqref{eq:diffeqphi01}. The function $\tilde{W}_i:\{0,1\}^N\times\{0,1\}^N\times\R^{Nn_{y,i}}\times\R^{Nn_{y,i}}\times\R^{n_{y,i}}\to\R_{\geq0}$ is given by
\begin{multline}\label{eq:Wtilde}
    \tilde{W}_{i}\left(\ell_i,b_i,y_i,e^\text{out}_i,r_{i}\right) := \max\Bigg\{\left\vert e^\text{out}_i+s_i(\ell_i,b_i,y_i,e^\text{out}_i,r_{i})\right\vert,\\  \lambda_i\max_{\mathcal{R}\subset\mathcal{R}_i(\ell_i,b_i)}\Big\vert e^\text{out}_i+\sum_{l\in\mathcal{R}}Y_ls_i(\ell_i,b_i,y_i,e^\text{out}_i,r_{i})\Big\vert\Bigg\},
\end{multline}where $Y_l:=(\Gamma_l\otimes I_{n_{y,i}})$,
$\mathcal{R}_i(\ell_i,b_i):=\{m\in\mathcal{V}^\text{out}_i\mid \ell_i^m=1\lor b_i^m=1\}$, which is the set of agents that still have to receive or process the latest transmitted measurement by agent $\A_i$ and where
\begin{equation}\label{eq:barsi}
    s_i(\ell_i,b_i,y_i,e^\text{out}_i,r_i)=\hspace{-1ex}\sum_{l\in\mathcal{R}_i(\ell_i,b_i)}\hspace{-1ex}Y_l(\mathbf{1}_N\otimes(r_i-y_i)-e^\text{out}_i)
\end{equation}
with the variables $\tilde{\gamma}_{i}(l)\in\R_{\geq0}$, $l\in\{0,1\}$, as in \eqref{eq:gamma01}.

\textbf{Step II.} \emph{Properties of $\tilde{W}_{i}$ and $V$.}
Below, we present some intermediate results in Lemma \ref{lemma:Wjump}-\ref{lemma:etajump}.
\begin{lemma}\label{lemma:Wjump}
	Consider the function $\tilde{W}_{i}$ defined \eqref{eq:Wtilde}. For each $i\in\Nc$, $m\in\mathcal{V}^\text{out}_i$ and for all $e^\text{out}_i\in\R^{Nn_{y,i}}$, $r_i\in\R^{n_{y,i}}$ and $0<\lambda_{i}<1$, the function $\tilde{W}_{i}$ satisfies for
    \underline{{\color{blue}update} events:}
    \begin{multline}
        \tilde{W}_{i}(\ell_{i}-\Gamma_m\mathbf{1}_N,b_i+\Gamma_m\mathbf{1}_N,y_i,e^\text{out}_i,r_{i})\\
        =\tilde{W}_{i}(\ell_{i},b_i,y_i,e^\text{out}_i,r_{i}),\label{eq:WleqW1}
    \end{multline}
	\underline{{\color{blue}sampling} events without transmission:}
	\begin{multline}
    	\tilde{W}_{i}(\ell_i,b_i-\Gamma_m\mathbf{1}_N,y_i,e^\text{out}_i+Y_ms_i(\ell_i,b_i,y_i,e^\text{out}_i,r_{i}),r_{i})\\
        \leq\tilde{W}_{i}(\ell_i,b_i,y_i,e^\text{out}_i,r_{i}),\label{eq:WleqW2}
	\end{multline}
	and \underline{{\color{blue}sampling} events with transmission:}
	\begin{multline}
    	\tilde{W}_{i}(Z_i\mathbf{1}_N,\mathbf{0}_N,y_i,e^\text{out}_i,y_i)\leq\lambda_{i}\tilde{W}_{i}(\mathbf{0}_N,\mathbf{0}_N,y_i,e^\text{out}_i,r_{i}),\label{eq:WleqW3}
    \end{multline}
	with $Y_m=(\Gamma_m\otimes I_{n_y})$ as before and where $\bar{Y}_m:=(I_{Nn_y}-\Gamma_m\otimes I_{n_y})$.\endstatement
\end{lemma}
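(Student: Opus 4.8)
The plan is to reduce all three relations to a single structural description of $\tilde W_i$ in terms of the \emph{pending set} $\mathcal R_i(\ell_i,b_i)$ and then read off each transition from that description. Fix $i\in\Nc$ and abbreviate $e:=e^\text{out}_i$ (with $m$-th block $e^m$, where $e^m=0$ for $m\notin\mathcal{V}^\text{out}_i$), $r:=r_i$, $y:=y_i$. The first thing I would record is that $Y_l(\mathbf 1_N\otimes(r-y)-e)$ has $l$-th block equal to $r-y-e^l$ and is zero in every other block; hence, by \eqref{eq:barsi}, $s_i$ is supported on $\mathcal R_i$ with $l$-th block $r-y-e^l$ for $l\in\mathcal R_i$. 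Consequently, for any subset $\mathcal R\subseteq\mathcal R_i$ the vector $e+\sum_{l\in\mathcal R}Y_l s_i$ has $l$-th block $r-y$ when $l\in\mathcal R$ and $e^l$ when $l\notin\mathcal R$ (the choice $\mathcal R=\mathcal R_i$ recovers the first entry $e+s_i$ of the maximum in \eqref{eq:Wtilde}). In words, $\tilde W_i$ is the maximum, over all ways in which a subset $\mathcal R$ of the still-pending receivers may already have updated their estimate to $r$, of the norm of the resulting error vector, with weight $1$ on the fully-updated configuration and weight $\lambda_i$ on the partial ones.

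With this in hand, \eqref{eq:WleqW1} is immediate: moving the flag from $\ell_i^m$ to $b_i^m$ leaves $\ell_i^m\lor b_i^m$, and therefore the pending set $\mathcal R_i$, unchanged, while $(y,e,r)$ are untouched; since $\tilde W_i$ depends on $(\ell_i,b_i)$ only through $\mathcal R_i$, it is literally unchanged. For \eqref{eq:WleqW2}, the post-jump data are $\mathcal R_i'=\mathcal R_i\setminus\{m\}$ and $e'=e+Y_m s_i$, i.e. the $m$-th block of $e$ is replaced by $r-y$. Applying the block description to the primed data shows that the fully-updated vectors coincide, $e'+s_i'=e+s_i$, so the first entries of the two maxima agree; moreover every partial configuration $\mathcal R\subseteq\mathcal R_i'$ on the primed side produces exactly the same block vector as the configuration $\mathcal R\cup\{m\}\subseteq\mathcal R_i$ on the original side. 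Thus the second entry of the post-jump maximum ranges over a sub-collection of the configurations of the pre-jump maximum and is therefore no larger, which yields \eqref{eq:WleqW2}.

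For \eqref{eq:WleqW3} I would use that immediately after a broadcast $r$ is reset to $y$ and the pending set is exactly the set of out-neighbours that must receive the new value, $\mathcal R_i(Z_i\mathbf 1_N,\mathbf 0_N)=\mathcal{V}^\text{out}_i$. Then $r-y=0$, so each block of $s_i$ equals $-e^l$ and, using $e^l=0$ off $\mathcal{V}^\text{out}_i$, one gets $s_i=-e$. The fully-updated term vanishes ($e+s_i=0$), and for a partial subset $\mathcal R\subseteq\mathcal{V}^\text{out}_i$ the vector $e+\sum_{l\in\mathcal R}Y_l s_i$ is just $e$ restricted to the blocks in $\mathcal{V}^\text{out}_i\setminus\mathcal R$, whose norm is largest at $\mathcal R=\emptyset$ and equals $|e|$. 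Hence the left-hand side equals $\max\{0,\lambda_i|e|\}=\lambda_i|e|$. On the right-hand side the pre-broadcast state has $\mathcal R_i=\emptyset$ and $s_i=0$, so $\tilde W_i(\mathbf 0_N,\mathbf 0_N,y,e,r)=|e|$; the inequality \eqref{eq:WleqW3} therefore holds, in fact with equality.

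I expect the only genuinely delicate point to be the monotonicity argument in \eqref{eq:WleqW2}: one must notice that removing the just-processed receiver $m$ from the pending set injects the admissible subsets of the smaller set into those of the larger one via $\mathcal R\mapsto\mathcal R\cup\{m\}$, so that the maximum cannot grow. Everything else is block-wise bookkeeping. Conceptually, the crux is that the two-term $\max$ with the $\lambda_i$-weight on partial configurations in \eqref{eq:Wtilde} is designed precisely so that all three relations hold simultaneously while accounting for asynchronous and delayed reception; verifying that this weighting is consistent across the three jump types is the substance of the lemma rather than any individual computation.
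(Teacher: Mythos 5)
Your proposal is correct and takes essentially the same approach as the paper: your upfront block description of $e^\text{out}_i+\sum_{l\in\mathcal{R}}Y_l s_i$ is a repackaging of the paper's own computations, and your three case arguments --- invariance of $\mathcal{R}_i$ when a flag moves from $\ell_i$ to $b_i$ for \eqref{eq:WleqW1}, coincidence of the fully-updated vectors together with the injection $\mathcal{R}\mapsto\mathcal{R}\cup\{m\}$ for \eqref{eq:WleqW2} (the paper phrases this via $Y_lY_m=0$ for $l\neq m$), and $s_i=-e^\text{out}_i$ after a transmission for \eqref{eq:WleqW3} --- are exactly the steps in the paper's proof. Note that, like the paper's own proof, you read $\mathcal{R}_i(Z_i\mathbf{1}_N,\mathbf{0}_N)$ as $\mathcal{V}^\text{out}_i$, which is the intended semantics even though the paper's literal definition $Z_i=I_N-\diag(\delta_i(1),\ldots,\delta_i(N))$ would flag the complement of $\mathcal{V}^\text{out}_i$; this is a typo in the paper's setup rather than a gap in your argument.
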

\begin{proof}
    \underline{Equality \eqref{eq:WleqW1}.} The equality follows directly from the fact that, for each $i\in\Nc$, $m\in\mathcal{V}^\text{out}_i$, $\mathcal{R}_i(\ell_i-\Gamma_m\mathbf{1}_N,b_i+\Gamma_m\mathbf{1}_N)=\mathcal{R}_i(\ell_i,b_i)$ and the definition of $s_i$ in \eqref{eq:barsi}.

    \underline{Inequality \eqref{eq:WleqW2}.} The first part of \eqref{eq:WleqW2} with $\tilde{W}_i$ as in \eqref{eq:Wtilde} is equal to
	\begin{multline}\label{eq:tildeWi0}
    	\tilde{W}_{i}(\ell_i,b_i-\Gamma_m\mathbf{1}_N,y_i,e^\text{out}_i+Y_ms_i(\ell_i,b_i,y_i,e^\text{out}_i,r_{i}),r_{i})\\
        =\max\Big\{|e^\text{out}_i+Y_ms_i(\ell_i,b_i,y_i,e^\text{out}_i,r_i)\hspace{30mm}\\
        \hspace{10mm}+s_i(\ell_i,b_i-\Gamma_m\mathbf{1}_N,y_i,e^\text{out}_i+Y_ms_i(\ell_i,b_i,y_i,e^\text{out}_i,r_i),r_i)|,\\
        \lambda_i\max_{\mathcal{R}\subset\mathcal{R}_i(\ell_i,b_i-\Gamma_m\mathbf{1}_N)}\Big|e^\text{out}_i+Y_ms_i(\ell_i,b_i,y_i,e^\text{out}_i,r_i)\\
        +\sum_{l\in\mathcal{R}}Y_ls_i(\ell_i,b_i-\Gamma_m\mathbf{1}_N,y_i,e^\text{out}_i+Y_ms_i(\ell_i,b_i,y_i,e^\text{out}_i,r_i),r_i)\Big|\Big\}
	\end{multline}
    for each $i\in\Nc$, $m\in\mathcal{V}^\text{out}_i$. By means of \eqref{eq:barsi} we find that
    \begin{multline}\label{eq:tildeWi1}
        e^\text{out}_i+s_i(\ell_i,b_i,y_i,e^\text{out}_i,r_i)=e^\text{out}_i+Y_ms_i(\ell_i,b_i,y_i,e^\text{out}_i,r_i)\hspace{20mm}\\
        \hspace{10mm}+s_i(\ell_i,b_i-\Gamma_m\mathbf{1}_N,y_i,e^\text{out}_i+Y_ms_i(\ell_i,b_i,y_i,e^\text{out}_i,r_i),r_i).
    \end{multline}
    Moreover, given the fact that $Y_lY_m=0$ for $l\neq m$, we have
    \begin{multline}\label{eq:tildeWi2}
        \max_{\mathcal{R}\subset\mathcal{R}_i(\ell_i,b_i-\Gamma_m\mathbf{1}_N)}\Big|e^\text{out}_i+Y_ms_i(\ell_i,b_i,y_i,e^\text{out}_i,r_i)\\
        +\sum_{l\in\mathcal{R}}Y_ls_i(\ell_i,b_i-\Gamma_m\mathbf{1}_N,y_i,e^\text{out}_i+Y_ms_i(\ell_i,b_i,y_i,e^\text{out}_i,r_i),r_i)\Big|\\
        =\max_{\mathcal{R}\subset\mathcal{R}_i(\ell_i,b_i-\Gamma_m\mathbf{1}_N)}\Big|e^\text{out}_i+\sum_{l\in\mathcal{R}\cup\{m\}}Y_ls_i(\ell_i,b_i,y_i,e^\text{out}_i,r_i)\Big|\\
        \leq\max_{\mathcal{R}\subset\mathcal{R}_i(\ell_i,b_i)}\Big|e^\text{out}_i+\sum_{l\in\mathcal{R}}Y_ls_i(\ell_i,b_i,y_i,e^\text{out}_i,r_i)\Big|.
    \end{multline}
    Combining \eqref{eq:tildeWi1} and \eqref{eq:tildeWi2} with \eqref{eq:tildeWi0}, we obtain \eqref{eq:WleqW2}.

    \underline{Inequality \eqref{eq:WleqW3}.} Observe that \eqref{eq:WleqW3} with $\tilde{W}_i$ as in \eqref{eq:Wtilde} is equal to
    \begin{multline}\label{eq:WleqW3_a}
        \max\Big\{|e^\text{out}_i+s_i(Z_i\mathbf{1}_N,\mathbf{0}_N,y_i,e^\text{out}_i,y_i)|,\\
        \lambda_i\max_{\mathcal{R}\subset\mathcal{R}_i(Z_i\mathbf{1}_N,\mathbf{0}_N)}\Big|e^\text{out}_i+\sum_{l\in\mathcal{R}}Y_ls_i(Z_i\mathbf{1}_N,\mathbf{0}_N,y_i,e^\text{out}_i,y_i)\Big|\Big\}\\
        \leq\lambda_i|e^\text{out}_i|
    \end{multline}
    for each $i\in\Nc$ as $b_i=\ell_i=\mathbf{0}_N$ and thus $\mathcal{R}_i(\ell_i,b_i)=\emptyset$ in \eqref{eq:Wtilde} and \eqref{eq:barsi}. By using the fact that, according to \eqref{eq:barsi},
    \begin{equation}
        s_i(Z_i\mathbf{1}_N,\mathbf{0}_N,y_i,e^\text{out}_i,y_i)=-e^\text{out}_i
    \end{equation}
    we find that \eqref{eq:WleqW3_a} is equal to
    \begin{equation}
        \max\Big\{0,\lambda_i\max_{\mathcal{R}\subset\mathcal{R}_i(Z_i\mathbf{1}_N,\mathbf{0}_N)}\Big|e^\text{out}_i-\sum_{l\in\mathcal{R}}Y_le^\text{out}_i\Big|\Big\}\leq\lambda_i|e^\text{out}_i|
    \end{equation}
    and thus \eqref{eq:WleqW3} holds.
\end{proof}

\begin{lemma}\label{lemma:Wflow}
	Consider the function $\tilde{W}_{i}$ as in \eqref{eq:Wtilde} and the function $H_i$ as in \eqref{eq:bargibound}. Then for all $\ell_i\in\{0,1\}^N$, $b_i\in\{0,1\}^N$, $r_i\in\R^{n_{y,i}}$, $x\in\R^{n_{x}}$, $v\in\R^{n_v}$ and almost all $e^\text{out}_i\in\R^{Nn_{y,i}}$, it holds that
	\begin{multline}
	\left\langle \frac{\partial\tilde{W}_{i}\left(\ell_i,b_i,y_i,e^\text{out}_i,r_{i}\right)}{\partial (e^\text{out}_i,y_i)},(g_i(x,e,w),f_{y,i}(x,e,w))\right\rangle \\ \leq \sqrt{N_i}H_i(x,e^\text{in}_i,v)+
	\tilde{L}_i(p(\ell_i,b_i))\tilde{W}_{i}\left(\ell_i,b_i,y_i,e^\text{out}_i,r_{i}\right),\label{eq:bounddotW-1}
	\end{multline}
	where $g_i(x,e,w):=(g^1_i(x,e,w),g^2_i(x,e,w),\ldots,\allowbreak g^N_i(x,e,w))$ with $g^m_i(x,e,w)$, $i\in\Nc$, $m\in\mathcal{V}^\text{out}_i$ as in \eqref{eq:gij}. Recall that $g^m_i:=-\delta_i(m)f_{y,i}(x,e,v)$.
\end{lemma}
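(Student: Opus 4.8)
The plan is to reduce \eqref{eq:bounddotW-1} to a single Cauchy--Schwarz estimate by first showing that, along flow, \emph{every} vector appearing inside the norms defining $\tilde W_i$ in \eqref{eq:Wtilde} shares the same time-derivative $g_i$. Along flow the discrete variables $\ell_i$, $b_i$ and the memory $r_i$ are frozen, so $\mathcal{R}_i(\ell_i,b_i)$ is constant and $\dot e^\text{out}_i=g_i$, $\dot y_i=f_{y,i}$, $\dot r_i=0$. First I would fix an arbitrary $\mathcal{R}\subseteq\mathcal{R}_i(\ell_i,b_i)$ and, writing $w_\mathcal{R}:=e^\text{out}_i+\sum_{l\in\mathcal{R}}Y_ls_i$, use $Y_lY_{l'}=0$ for $l\neq l'$ together with \eqref{eq:barsi} to rewrite $w_\mathcal{R}=e^\text{out}_i+\sum_{l\in\mathcal{R}}Y_l(\mathbf{1}_N\otimes(r_i-y_i)-e^\text{out}_i)$. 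Differentiating and substituting $g_i^m=-\delta_i(m)f_{y,i}$ shows that the correction terms $Y_l(\mathbf{1}_N\otimes(-f_{y,i})-g_i)$ vanish for each $l\in\mathcal{R}\subseteq\mathcal{V}^\text{out}_i$ (since the $l$-th block of $g_i$ is $-f_{y,i}$), leaving $\dot w_\mathcal{R}=g_i$. The case $\mathcal{R}=\mathcal{R}_i(\ell_i,b_i)$ covers the first, unscaled argument $e^\text{out}_i+s_i$ as well.

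Next I would treat $\tilde W_i$ as a maximum of the finitely many locally Lipschitz maps $|w_{\mathcal{R}_i}|$ and $\lambda_i|w_\mathcal{R}|$. By Rademacher's theorem it is differentiable for almost all $e^\text{out}_i$, and there its directional derivative along $(g_i,f_{y,i})$ equals the largest derivative among the active terms. For any active term that derivative is $\langle w/|w|,g_i\rangle$ (times $\lambda_i\leq1$ for the scaled ones), which is bounded by $|g_i|$ via Cauchy--Schwarz, after discarding the measure-zero set on which some active argument vanishes. Since $g_i$ carries the block $-f_{y,i}$ in exactly the $N_i$ positions $m\in\mathcal{V}^\text{out}_i$ and zeros elsewhere, one has $|g_i|=\sqrt{N_i}\,|f_{y,i}|$, and Condition \ref{cond:W3} then gives $\sqrt{N_i}|f_{y,i}|\leq\sqrt{N_i}H_i(x,e^\text{in}_i,v)+\sqrt{N_i}L_i|e^i_i|$.

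Finally I would absorb the residual term $\sqrt{N_i}L_i|e^i_i|$ into $\tilde L_i(p(\ell_i,b_i))\tilde W_i$ via a two-case argument on $p$. When $p=0$, i.e.\ $\mathcal{R}_i(\ell_i,b_i)=\emptyset$, then $s_i=0$ and $\tilde W_i=|e^\text{out}_i|\geq|e^i_i|$ because $e^i_i$ is a block of $e^\text{out}_i$, so $\sqrt{N_i}L_i|e^i_i|\leq\tilde L_i(0)\tilde W_i$ by \eqref{eq:gamma01}. When $p=1$, taking $\mathcal{R}=\emptyset$ in the second argument of \eqref{eq:Wtilde} yields $\tilde W_i\geq\lambda_i|e^\text{out}_i|\geq\lambda_i|e^i_i|$, whence $\sqrt{N_i}L_i|e^i_i|\leq\lambda_i^{-1}\sqrt{N_i}L_i\tilde W_i=\tilde L_i(1)\tilde W_i$, again by \eqref{eq:gamma01}. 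Combining the two estimates gives \eqref{eq:bounddotW-1}. I expect the main obstacle to be the differentiability bookkeeping for the maximum: justifying that at almost every $e^\text{out}_i$ the (generalized) gradient reduces to gradients of active terms whose arguments are nonzero, so that the clean identity $\dot w_\mathcal{R}=g_i$ can be applied uniformly across the active set; the remaining steps are purely algebraic.
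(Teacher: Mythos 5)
Your proof is correct and follows essentially the same route as the paper's: the directional derivative of the active norm term is bounded via Cauchy--Schwarz by $|g_i|=\sqrt{N_i}|f_{y,i}|$, Condition~\ref{cond:W3} bounds $|f_{y,i}|\leq H_i+L_i|e_i^i|$, and the lower bounds $\tilde{W}_i\geq\lambda_i^{p(\ell_i,b_i)}|e^\text{out}_i|\geq\lambda_i^{p(\ell_i,b_i)}|e_i^i|$ absorb the residual term into $\tilde{L}_i(p(\ell_i,b_i))\tilde{W}_i$ exactly as in \eqref{eq:gamma01}. The only difference is organizational: where the paper splits into two cases according to which argument of the max in \eqref{eq:Wtilde} is active and repeats the same block-wise estimate, you first observe that every candidate vector $w_{\mathcal{R}}$ has flow derivative exactly $g_i$ (since the blocks of $g_i$ indexed by $\mathcal{V}^\text{out}_i$ all equal $-f_{y,i}$), which lets you handle all active terms uniformly --- a slightly cleaner bookkeeping of the same computation.
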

\begin{proof}
    We consider the following two cases.

    \underline{Case 1:} $\tilde{W}_i(\ell_i,b_i,y_i,e^\text{out}_i,r_i)=|e^\text{out}_i+s_i(\ell_i,b_i,y_i,e^\text{out}_i,r_i)|$. For this case we have that
    \begin{multline}\label{eq:gradWcase1}
        \left\langle\frac{\partial\tilde{W}_i(\ell_i,b_i,y_i,e^\text{out}_i,r_i)}{\partial(e^\text{out}_i,y_i)},(g_i,f_{y,i})\right\rangle\\
        =\left\langle\frac{\partial|e^\text{out}_i+s_i(\ell_i,b_i,y_i,e^\text{out}_i,r_i)|}{\partial(e^\text{out}_i,y_i)},(g_i,f_{y,i})\right\rangle\\
        \stackrel{\eqref{eq:barsi}}{\leq}\sqrt{\sum_{m\in\mathcal{V}_i^\text{out}\setminus\mathcal{R}_i(\ell_i,b_i)} |g_i^m|^2+\sum_{m\in\mathcal{R}_i(\ell_i,b_i)}|f_{y,i}|^2}\\
        \hspace{-21mm}\stackrel{\eqref{eq:gij},\eqref{eq:bargibound}}{\leq}\sqrt{N_i}\left(H_i(x,e^\text{in}_i,v)+L_{i}|e_i^i|\right)\\
        \leq\sqrt{N_i}H_i(x,e^\text{in}_i,v)+L_i\sqrt{N_i}\frac{\lambda_i}{\lambda_i^{-{p(\ell_i,b_i)}}}|e^\text{out}_i|\\
        \stackrel{\eqref{eq:gamma01}}{\leq}\sqrt{N_i}H_i(x,e^\text{in}_i,v)+\tilde{L}_i(p(\ell_i,b_i))\tilde{W}_i(\ell_i,b_i,y_i,e^\text{out}_i,r_i)
    \end{multline}
    where we used the facts in the last inequality that $|e^\text{out}_i|=\tilde{W}_i(\mathbf{0}_N,\mathbf{0}_N,y_i,e^\text{out}_i,r_i)$ and thus $\lambda_i|e^\text{out}_i|\leq\lambda_i\underset{\mathcal{R}\subset\mathcal{R}_i(\ell_i,b_i)}{\max}\left\{|e^\text{out}_i+\sum_{l\in\mathcal{R}}Y_ls_i(\ell_i,b_i,y_i,e^\text{out}_i,r_i)|\right\}$. Moreover, recall that $N_i=|\mathcal{V}^\text{out}_i|$.

    \underline{Case 2:} $\tilde{W}_i(\ell_i,b_i,y_i,e^\text{out}_i,r_i)\\=\lambda_i\underset{\mathcal{R}\subset\mathcal{R}_i(\ell_i,b_i)}{\max}\Big\vert e^\text{out}_i+\sum_{l\in\mathcal{R}}Y_ls_i(\ell_i,b_i,y_i,e^\text{out}_i,r_{i})\Big\vert$ (and thus $p(\ell_i,b_i)=1$ (otherwise $\mathcal{R}_i(\ell_i,b_i)=\emptyset$ and thus $s_{i}(\ell_i,b_i,y_i,e^\text{out}_i,r_i)=0$). For this case, we define the set $\mathcal{R}^*$ as
    \begin{equation}
        \mathcal{R}^*:=\argmax_{\mathcal{R}\subset\mathcal{R}_i(\ell_i,b_i)}\Big\{\Big|e^\text{out}_i+\sum_{l\in\mathcal{R}}Y_ls_i(\ell_i,b_i,y_i,e^\text{out}_i,r_{i})\Big|\Big\}
    \end{equation}
    such that
    \begin{multline}
        \max_{\mathcal{R}\subset\mathcal{R}_i(\ell_i,b_i)}\Big\{\Big|e^\text{out}_i+\sum_{l\in\mathcal{R}}Y_ls_i(\ell_i,b_i,y_i,e^\text{out}_i,r_{i})\Big|\Big\}=\\
        \Big|e^\text{out}_i+\sum_{l\in\mathcal{R}^*}Y_ls_i(\ell_i,b_i,y_i,e^\text{out}_i,r_{i})\Big|.
    \end{multline}

    Using the definition above, we have that
    \begin{multline}\label{eq:gradWcase2}
        \left\langle\frac{\partial\tilde{W}_{i}\left(\ell_i,b_i,y_i,e^\text{out}_i,r_{i}\right)}{\partial (e^\text{out}_i,y_i)},\left(g_i,f_{y,i})\right)\right\rangle \\
        =\lambda_i\left\langle\frac{\left|e^\text{out}_i+\sum_{l\in\mathcal{R}^*}Y_ls_i(\ell_i,b_i,y_i,e^\text{out}_i,r_{i})\right|}{\partial (e^\text{out}_i,y_i)},\left({g}_i,f_{y,i}\right)\right\rangle\\
        \hspace{-26mm}\stackrel{\eqref{eq:barsi}}{\leqslant}\lambda_i\sqrt{\sum_{m\in\mathcal{V}_i^\text{out}\setminus\mathcal{R}^*}\vert g^m_i\vert^2+\sum_{m\in\mathcal{R}^*}\vert f_{y,i}\vert^2}\\
        \hspace{-35mm}\stackrel{\eqref{eq:gij},\eqref{eq:bargibound}}{\leq}\sqrt{N_i}\left(H_i(x,e^\text{in}_i,v)+L_{i}|e_i^i|\right)\\
        \hspace{-18mm}\leq\sqrt{N_i}H_i(x,e^\text{in}_i,v)+L_{i}\sqrt{N_i}\frac{\lambda_i}{\lambda_i^{-{p(\ell_i,b_i)}}}|e^\text{out}_i|\\
        \stackrel{\eqref{eq:gamma01}}{\leq}\sqrt{N_i}H_i(x,e^\text{in}_i,v)+\tilde{L}_i(p(\ell_i,b_i))\tilde{W}_i(\ell_i,b_i,y_i,e^\text{out}_i,r_i)
    \end{multline}
    where we used the fact that $\lambda_i<1$ together with the same arguments as before.

    Based on \eqref{eq:gradWcase1} and \eqref{eq:gradWcase2}, we can conclude that \eqref{eq:bounddotW-1} is true, which completes the proof of Lemma \ref{lemma:Wflow}.
\end{proof}

\begin{lemma}\label{lemma:Vflow}
	Consider the system $\mathcal{H}(\mathcal{C},F,\mathcal{D},G)$ with data $\mathcal{C}$, $F$, $\mathcal{D}$ and $G$ as described in \eqref{eq:F(xi)}-\eqref{eq:localjumpset2}, the function $V$ satisfying \eqref{eq:Vflowcondition} and the function $H_i$ as in \eqref{eq:bargibound}. Then for all $e\in\R^{Nn_{y}}$, $r\in\R^{n_y}$, $v\in\R^{n_v}$, $\ell\in\{0,1\}^{N^2}$, $b\in\{0,1\}^{N^2}$ and all $x\in\R^{n_x}$, it holds that
	\begin{multline}\label{eq:Vflow}
    	\hspace{-4mm}\left\langle \nabla V(x),f(x,e,v)\right\rangle\leq s(x,e,v)+\sum_{i\in\Nc}\Big(-\varsigma_{i}(\yhss{\text{in}}{i})
    	\\-\mu_iN_iH_i^{2}(x,e^\text{in}_i,v)+\tilde{\gamma}_i^2(p(\ell_{i},b_i))\tilde{W}^2_{i}\left(\ell_i,b_i,y_i,e^\text{out}_i,r_{i}\right)\Big).
	\end{multline}
\end{lemma}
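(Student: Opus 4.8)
The plan is to obtain \eqref{eq:Vflow} directly from the flow condition \eqref{eq:Vflowcondition} of Condition~\ref{cond:V}. The right-hand sides of \eqref{eq:Vflowcondition} and \eqref{eq:Vflow} coincide term by term, except that the $\mathcal{L}_2$-gain contribution $\gamma_i^2|e^\text{out}_i|^2$ in \eqref{eq:Vflowcondition} is replaced by $\tilde{\gamma}_i^2(p(\ell_i,b_i))\tilde{W}_i^2(\ell_i,b_i,y_i,e^\text{out}_i,r_i)$ in \eqref{eq:Vflow}. Since the left-hand side $\langle\nabla V(x),f(x,e,v)\rangle$ is common, it therefore suffices to prove the pointwise majorization
\[
\gamma_i^2|e^\text{out}_i|^2 \;\leq\; \tilde{\gamma}_i^2(p(\ell_i,b_i))\,\tilde{W}_i^2(\ell_i,b_i,y_i,e^\text{out}_i,r_i)
\]
for every $i\in\Nc$ and then to substitute it back into \eqref{eq:Vflowcondition}, summing over $i\in\Nc$.

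To establish this bound I would first record a universal lower estimate on $\tilde{W}_i$: choosing $\mathcal{R}=\emptyset$ in the second argument of the maximum in \eqref{eq:Wtilde} gives $\tilde{W}_i\geq\lambda_i|e^\text{out}_i|$, valid for any $\ell_i$, $b_i$. I would then split on the value of $p(\ell_i,b_i)$, recalling from \eqref{eq:gamma01} that $\tilde{\gamma}_i(0)=\gamma_i$ and $\tilde{\gamma}_i(1)=\lambda_i^{-1}\gamma_i$. If $p(\ell_i,b_i)=0$, then $\sum_{m\in\mathcal{V}^\text{out}_i}(\ell_i^m+b_i^m)=0$, so $\mathcal{R}_i(\ell_i,b_i)=\emptyset$ and $s_i=0$ by \eqref{eq:barsi}; hence $\tilde{W}_i=\max\{|e^\text{out}_i|,\lambda_i|e^\text{out}_i|\}=|e^\text{out}_i|$ as $\lambda_i\in(0,1)$, and $\tilde{\gamma}_i^2(0)\tilde{W}_i^2=\gamma_i^2|e^\text{out}_i|^2$ with equality. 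If $p(\ell_i,b_i)=1$, then combining $\tilde{W}_i\geq\lambda_i|e^\text{out}_i|$ with $\tilde{\gamma}_i(1)=\lambda_i^{-1}\gamma_i$ yields $\tilde{\gamma}_i^2(1)\tilde{W}_i^2\geq\lambda_i^{-2}\gamma_i^2\cdot\lambda_i^2|e^\text{out}_i|^2=\gamma_i^2|e^\text{out}_i|^2$. In both cases the majorization holds, which proves the lemma.

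I do not anticipate a genuine obstacle: this lemma is essentially a bookkeeping step that re-expresses the gain term of Condition~\ref{cond:V} in terms of the weighted error $\tilde{W}_i$, which is the quantity actually propagated along flows by Lemma~\ref{lemma:Wflow} and reset at jumps via Lemma~\ref{lemma:Wjump}. The only point deserving care is the role of the factor $\lambda_i^{-l}$ in the definition \eqref{eq:gamma01} of $\tilde{\gamma}_i$: it is calibrated precisely so that the $\lambda_i$-contraction embedded in the second branch of $\tilde{W}_i$ is compensated when $p=1$, making the bound hold with the single constant $\gamma_i$ regardless of whether a transmission is in transit.
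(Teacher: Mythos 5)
Your proof is correct and follows essentially the same route as the paper: reduce \eqref{eq:Vflow} to the pointwise bound $\gamma_i^2|e^\text{out}_i|^2 \leq \tilde{\gamma}_i^2(p(\ell_i,b_i))\tilde{W}_i^2$, then split on $p(\ell_i,b_i)$, obtaining equality when $p=0$ (since $\mathcal{R}_i=\emptyset$ forces $s_i=0$ and $\tilde{W}_i=|e^\text{out}_i|$) and using $\tilde{W}_i\geq\lambda_i|e^\text{out}_i|$ together with $\tilde{\gamma}_i(1)=\lambda_i^{-1}\gamma_i$ when $p=1$. Your write-up is in fact slightly more explicit than the paper's, which leaves the lower bound $\tilde{W}_i\geq\lambda_i|e^\text{out}_i|$ (via the choice $\mathcal{R}=\emptyset$) implicit.
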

\begin{proof}
    To prove Lemma \ref{lemma:Vflow}, based on \eqref{eq:Vflowcondition}, we need to show that
    \begin{equation}
        \tilde{\gamma}^2_i(p(\ell_i,b_i))\tilde{W}_i^2(\ell_i,b_i,y_i,e^\text{out}_i,r_i)\geq\gamma^2_{i}|e^\text{out}_i|^2.
    \end{equation}
    Recalling \eqref{eq:gamma01}, we obtain for $p(\ell_i,b_i)=0$ (and thus $\ell_i=b_i=\mathbf{0}_N$) that
    \begin{equation}
        \tilde{\gamma}_i^2(0)\tilde{W}_i^2(\mathbf{0}_N,\mathbf{0}_N,y_i,e^\text{out}_i,r_i)\stackrel{\eqref{eq:gamma01},\eqref{eq:Wtilde}}{=}\gamma_{i}^2|e^\text{out}_i|^2
    \end{equation}
    and for $p(\ell_i,b_i)=1$ that
    \begin{multline}
        \tilde{\gamma}_i^2(1)\tilde{W}_i(\ell_i,b_i,y_i,e^\text{out}_i,r_i)\stackrel{\eqref{eq:gamma01}}{=}\lambda^{-2}\gamma_{i}^2\tilde{W}_i^2(\ell_i,b_i,y_i,e^\text{out}_i,r_i)\\
        \stackrel{\eqref{eq:Wtilde}}{\geq}\gamma^2_{i}|e^\text{out}_i|^2
    \end{multline}
    for all $e^\text{out}_i\in\R^{Nn_{y,i}}$ and $r_i\in\R^{n_{y,i}}$.
\end{proof}

\textbf{Step III.} \emph{Properties of $\eta$}

As described in \eqref{eq:ETCTriggerDesign}, the dynamics of $\eta$ are governed by the functions $\Psi_i$, $\varrho_i$ and $\nu_i$ which are given in \eqref{eq:Psi2}, \eqref{eq:etai0} and \eqref{eq:etai1}, respectively. These functions are specifically designed such that the following lemma holds.
\begin{lemma}\label{lemma:etajump}
    For all $y_i\in\R^{n_{y,i}}$, $\yhss{\text{out}}{i}\in\R^{Nn_{y,i}}$ and all $\tau_i\geq\tau_\text{MIET}^i$, $i\in\Nc$, it holds that
    \begin{multline}\label{eq:etai0bound}
        \varrho_i(y_i,\yhss{\text{out}}{i})\leq-\Big(\tilde{\gamma_{i}}(1)\bar{\phi}_{1,i}(0,0)\tilde{W}^2_{i}(Z_i\mathbf{1}_N,\mathbf{0}_N,y_i,e^\text{out}_i,y_i)\\
        -\tilde{\gamma_{i}}(0)\bar{\phi}_{0,i}(\tau_i,\sigma_i)\tilde{W}^2_{i}(\mathbf{0}_N,\mathbf{0}_N,y_i,e^\text{out}_i,r_{i})\Big).
    \end{multline}
    For all $y_i\in\R^{n_{y,i}}$, $\yhss{\text{out}}{i}\in\R^{Nn_{y,i}}$, $i\in\Nc$, it holds that if $\tau_i\leq\tau_\text{MIET}^i$, $\nu_i=0$. Otherwise, if $\tau_i>\tau_\text{MIET}^i$, it holds that
    \begin{multline}\label{eq:etai1bound}
        \nu_i(y_i,\yhss{\text{out}}{i},\tau_i)\leq-\Big(\tilde{\gamma_{i}}(0)\bar{\phi}_{0,i}(\tau,0)\tilde{W}^2_{i}(\ell_i,b_i,y_i,e^\text{out}_i,r_i)\\
    	-\tilde{\gamma_{i}}(0)\bar{\phi}_{0,i}(\tau,\sigma_i)\tilde{W}^2_{i}(\ell_i,b_i,y_i,e^\text{out}_i,r_i)\Big).
    \end{multline}
\end{lemma}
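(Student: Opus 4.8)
The plan is to reduce both inequalities to the scalar monotonicity of $\phi_{0,i}$, after collapsing the $\tilde{W}_i$-terms to $|e^\text{out}_i|$ using the structure of the relevant instants. First I would record the two facts about $\tilde{W}_i$ that drive everything: from \eqref{eq:Wtilde}, when $\ell_i=b_i=\mathbf{0}_N$ one has $\mathcal{R}_i(\ell_i,b_i)=\emptyset$, hence $s_i=0$ and, since $\lambda_i<1$, $\tilde{W}_i(\mathbf{0}_N,\mathbf{0}_N,y_i,e^\text{out}_i,r_i)=|e^\text{out}_i|$; and from \eqref{eq:WleqW3} in Lemma \ref{lemma:Wjump}, $\tilde{W}_i(Z_i\mathbf{1}_N,\mathbf{0}_N,y_i,e^\text{out}_i,y_i)\leq\lambda_i|e^\text{out}_i|$. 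I would also note that the right-hand side of \eqref{eq:diffeqphi01} is strictly negative, so each $\phi_{l,i}$ is strictly decreasing, and evaluate $\bar{\phi}$ from \eqref{eq:barphi}: $\bar{\phi}_{1,i}(0,0)=\phi_{1,i}(0)$, while for $\tau_i\geq\taumiet$ one has $\bar{\phi}_{0,i}(\tau_i,\sigma_i)\geq\phi_{0,i}(\taumiet+\sigma_i)$ (with equality when $\tau_i-\sigma_i>\taumiet$, and $\bar{\phi}_{0,i}=\phi_{0,i}(\tau_i)\geq\phi_{0,i}(\taumiet+\sigma_i)$ by monotonicity when $\tau_i-\sigma_i\leq\taumiet$).

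For the bound \eqref{eq:etai0bound} on $\varrho_i$, I would substitute its definition \eqref{eq:etai0} together with $\varepsilon_\varrho$, replace $\tilde{W}_i^2(\mathbf{0}_N,\mathbf{0}_N,\cdot)$ by $|e^\text{out}_i|^2$ and, using $\tilde{\gamma}_i(1)\phi_{1,i}(0)\geq0$ and $\tilde{W}_i^2(Z_i\mathbf{1}_N,\mathbf{0}_N,\cdot)\leq\lambda_i^2|e^\text{out}_i|^2$, lower-bound the right-hand side of \eqref{eq:etai0bound}. After cancelling the common term $\tilde{\gamma}_i(1)\phi_{1,i}(0)\lambda_i^2|e^\text{out}_i|^2$, the claim collapses to $\phi_{0,i}(\taumiet+\sigma_i)\leq\bar{\phi}_{0,i}(\tau_i,\sigma_i)$, which is precisely the $\bar{\phi}$-estimate established above.

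For $\nu_i$, the case $\tau_i<\taumiet$ is immediate since $\omega_i(\tau_i)=1$ forces $\nu_i=0$ in \eqref{eq:etai1}, and at $\tau_i=\taumiet$ I would select the value $\omega_i=1$ from the set-valued map \eqref{eq:Omegai}, again giving $\nu_i=0$. For $\tau_i>\taumiet$ the key observation is that $\taumiet\geq\tau_\text{MAD}^i$, so the state constraint defining $\mathbb{X}$ (equivalently Assumption \ref{ass:smalldelay}) forces $\ell_i=b_i=\mathbf{0}_N$; hence $\tilde{W}_i(\ell_i,b_i,y_i,e^\text{out}_i,r_i)=|e^\text{out}_i|$, and with $\omega_i=0$ one has $\nu_i=\tilde{\gamma}_i(0)\varepsilon_\nu|e^\text{out}_i|^2$. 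Evaluating $\bar{\phi}_{0,i}(\tau_i,0)=\phi_{0,i}(\taumiet)$ and substituting $\varepsilon_\nu$, the claim \eqref{eq:etai1bound} again reduces to $\phi_{0,i}(\taumiet+\sigma_i)\leq\bar{\phi}_{0,i}(\tau_i,\sigma_i)$, handled as above.

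The main obstacle, and the step I would set up most carefully, is this last collapse: the inequality holds only because $\tilde{W}_i$ equals $|e^\text{out}_i|$ for $\tau_i>\taumiet$. Since the coefficient $\tilde{\gamma}_i(0)\big(\bar{\phi}_{0,i}(\tau_i,\sigma_i)-\phi_{0,i}(\taumiet)\big)$ multiplying $\tilde{W}_i^2$ on the right is non-positive, a generic $\tilde{W}_i>|e^\text{out}_i|$ would make the right-hand side too small and break the bound; it is precisely the delay condition forcing $\ell_i=b_i=\mathbf{0}_N$ beyond $\tau_\text{MAD}^i$ that rescues it. I would therefore state explicitly where $\taumiet\geq\tau_\text{MAD}^i$ and the $\mathbb{X}$-constraint are invoked, and verify the two sub-cases in the definition of $\bar{\phi}_{0,i}$ separately, the monotonicity of $\phi_{0,i}$ being the only analytic ingredient.
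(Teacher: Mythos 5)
Your proposal is correct, and it rests on the same ingredients as the paper's own proof (the definitions \eqref{eq:etai0}, \eqref{eq:etai1}, \eqref{eq:Omegai}, the identity $\tilde{W}_i(\mathbf{0}_N,\mathbf{0}_N,y_i,e^\text{out}_i,r_i)=|e^\text{out}_i|$, inequality \eqref{eq:WleqW3}, and the definition \eqref{eq:barphi} of $\bar{\phi}_{l,i}$), but it is more careful in the two places where the paper's write-up is loose, and in one of them your extra care is genuinely needed. First, the paper passes from $\phi_{0,i}(\taumiet+\sigma_i)$ to $\bar{\phi}_{0,i}(\tau_i,\sigma_i)$ (and similarly in the $\nu_i$ chain) via claimed \emph{equalities}, which by \eqref{eq:barphi} hold only in the branch $\tau_i-\sigma_i>\taumiet$; your two-branch case split, with the branch $\tau_i-\sigma_i\leq\taumiet$ handled by monotonicity of $\phi_{0,i}$ (its derivative in \eqref{eq:diffeqphi01} is negative as long as $\phi_{0,i}\geq 0$, which the conditions in Condition \ref{cond:mietmad} guarantee on the relevant interval), establishes the inequality $\phi_{0,i}(\taumiet+\sigma_i)\leq\bar{\phi}_{0,i}(\tau_i,\sigma_i)$ in full, which is what the lemma actually needs. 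Second, and more substantively, for \eqref{eq:etai1bound} with $\tau_i>\taumiet$ the paper replaces $|e^\text{out}_i|^2$ by $\tilde{W}_i^2(\ell_i,b_i,y_i,e^\text{out}_i,r_i)$ citing \eqref{eq:Wtilde} and \eqref{eq:WleqW2}; but since the coefficient $\tilde{\gamma}_i(0)\bigl(\bar{\phi}_{0,i}(\tau_i,\sigma_i)-\bar{\phi}_{0,i}(\tau_i,0)\bigr)$ multiplying $\tilde{W}_i^2$ is non-positive, an upper bound on $\tilde{W}_i$ goes the wrong way, so \eqref{eq:WleqW2} by itself cannot justify that step --- exactly the obstacle you flag. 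Your resolution, invoking $\taumiet\geq\tau_\text{MAD}^i$ together with the constraint defining $\mathbb{X}$ (equivalently Assumption \ref{ass:smalldelay}) to force $\ell_i=b_i=\mathbf{0}_N$ and hence $\tilde{W}_i(\ell_i,b_i,y_i,e^\text{out}_i,r_i)=|e^\text{out}_i|$, is the argument that makes the lemma true in the form in which it is used in Case 2 of Step IV; so your write-up would repair, not merely reproduce, the published proof. The one caveat you share with the paper is at $\tau_i=\taumiet$, where \eqref{eq:Omegai} is set-valued, so ``$\nu_i=0$'' should be read as the statement that $0$ is obtained by the selection $\omega_i=1$ (equivalently $0\in\nu_i$), which is how both you and the paper treat it.
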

\begin{proof}
    For \eqref{eq:etai0bound}, it holds that for all $e^\text{out}_i\in\R^{Nn_{y,i}}$ and all $\tau_i>\tau_\text{MIET}^i$
    \begin{multline}
        \hspace{-3mm}\varrho_i(e^\text{out}_i)=-\left(\tilde{\gamma}_{i}(1)\phi_{1,i}(0)\lambda_i^2-\tilde{\gamma}_{i}(0)\phi_{0,i}(\tau_\text{MIET}^i+\sigma_i)\right)|e^\text{out}_i|^2\\
        \hspace{8mm}\stackrel{\eqref{eq:barphi}}{=}-\left(\tilde{\gamma}_{i}(1)\bar{\phi}_{1,i}(0,0)\lambda_i^2-\tilde{\gamma}_{i}(0)\bar{\phi}_{0,i}(\tau_i,\sigma_i)\right)|e^\text{out}_i|^2\\
        \hspace{1mm}\stackrel{\eqref{eq:Wtilde},\eqref{eq:WleqW3}}{\leq}\kern-8pt-\Big(\tilde{\gamma_{i}}(1)\bar{\phi}_{1,i}(0,0)\tilde{W}^2_{i}(Z_i\mathbf{1}_N,\mathbf{0}_N,y_i,e^\text{out}_i,y_i)\\
        -\tilde{\gamma_{i}}(0)\bar{\phi}_{0,i}(\tau_i,\sigma_i)\tilde{W}^2_{i}(\mathbf{0}_N,\mathbf{0}_N,y_i,e^\text{out}_i,r_{i})\Big)
    \end{multline}
    For \eqref{eq:etai1bound}, observe that, for all $e^\text{out}_i\in\R^{Nn_{y,i}}$, if $\tau_i\leq\tau_\text{MIET}^i$, it holds that $\nu_i(e^\text{out}_i,\tau_i)=0$ due to the map \eqref{eq:Omegai}. Moreover, if $\tau_i>\tau_\text{MIET}^i$, for all $e^\text{out}_i\in\R^{Nn_{y,i}}$ it holds that
    \begin{multline}
        \nu_i(e^\text{out}_i,\tau_i)=-\tilde{\gamma}_i(0)\Big(\phi_{0,i}(\tau_\text{MIET}^i)-\phi_{0,i}(\tau_\text{MIET}^i+\sigma_i)\Big)\vert e^\text{out}_i\vert^2\\
        \hspace{3mm}\stackrel{\eqref{eq:barphi}}{=}-\tilde{\gamma}_i(0)\Big(\bar{\phi}_{0,i}(\tau_\text{MIET}^i,0)-\bar{\phi}_{0,i}(\tau_\text{MIET}^i,\sigma_i)\Big)\vert e^\text{out}_i\vert^2\\
        \hspace{-3mm}\stackrel{\eqref{eq:Wtilde},\eqref{eq:WleqW2}}{\leq}-\Big(\tilde{\gamma_{i}}(0)\bar{\phi}_{0,i}(\tau_\text{MIET}^i,0)\tilde{W}^2_{i}(\ell_i,b_i,y_i,e^\text{out}_i,r_i)\\
    	\hspace{6mm}-\tilde{\gamma_{i}}(0)\bar{\phi}_{0,i}(\tau_\text{MIET}^i,\sigma_i)\tilde{W}^2_{i}(\ell_i,b_i,y_i,e^\text{out}_i,r_i)\Big)\\
        \hspace{-8mm}=-\Big(\tilde{\gamma_{i}}(0)\bar{\phi}_{0,i}(\tau_i,0)\tilde{W}^2_{i}(\ell_i,b_i,y_i,e^\text{out}_i,r_i)\\
    	-\tilde{\gamma_{i}}(0)\bar{\phi}_{0,i}(\tau_i,\sigma_i)\tilde{W}^2_{i}(\ell_i,b_i,y_i,e^\text{out}_i,r_i)\Big)
    \end{multline}
    which completes the proof of Lemma \ref{lemma:etajump}.
\end{proof}

\textbf{Step IV.} \emph{Validate conditions of the storage function}

In this step, we verify that the function $U$ as given in \eqref{eq:LyapFunction2} is indeed a valid storage function for the supply rate $s(x,e,v)$ as described in Definition \ref{defn:Vdiss}.

\underline{Flow Dynamics of $U(\xi)$:} By combining \eqref{eq:ETCTriggerDesign}, \eqref{eq:diffeqphi01}, Lemma \ref{lemma:Wflow} and Lemma \ref{lemma:Vflow}, we obtain that for almost all $(\xi,v)\in{\mathbb{X}}\times\R^{n_v}$
\begin{multline}
    \left\langle \nabla U(\xi),F(\xi,v)\right\rangle\leq s(x,e,v)+\sum_{i\in\Nc}\Big[-\varsigma_{i}-\mu_iN_iH_i^{2}\\
    +\tilde{\gamma}^2_i(p(\ell_i,b_i))\tilde{W}_{i}^{2}\\
    +2\tilde{\gamma}_i(p(\ell_i,b_i))\bar{\phi}_{p(\ell_i,b_i),i}\tilde{W}_{i}(\sqrt{N_i}H_i+\tilde{L}_i(p(\ell_i,b_i))\tilde{W}_{i})\\
    -\tilde{\gamma}_i(p(\ell_i,b_i))\tilde{W}_{i}^{2}\times\\
    \left(2\tilde{L}_i(p(\ell_i,b_i))\bar{\phi}_{p(\ell_i,b_i),i}+ \tilde{\gamma}_i(p(\ell_i,b_i))(\bar{\phi}_{p(\ell_i,b_i),i}^2+1)\right)\\
    +\varsigma_i+(1-\epsilon_i)\mu_iN_i\underline{H}_i^2(\yhs{i})-\varphi_i(\eta_i)\Big],
\end{multline}
where we have omitted the arguments of $\tilde{W}_{i}(\ell_i,b_i,y_i,e^\text{out}_i,r_{i})$, $H_i(x,e_i^\text{in},v)$ and $\varsigma_{i}(\yhss{\text{in}}{i})$. By using the fact that for some $\epsilon_i>0$
\begin{multline*}
    2\tilde{\gamma}_i(p(\ell_i,b_i))\bar{\phi}_{p(\ell_i,b_i),i}\tilde{W}_{i}\sqrt{N_i}H_i\\
    \leq\frac{1}{\mu_i\epsilon_i}\tilde{\gamma}_i^2(p(\ell_i,b_i))\bar{\phi}_{p(\ell_i,b_i),i}^2\tilde{W}^2_{i}+\mu_i\epsilon_iN_i\underline{H}_i^2,
\end{multline*}
with $\underline{H}_i$ as in Condition \ref{cond:H}, and by substituting \eqref{eq:supplyrate} and \eqref{eq:Psi2}, we obtain
\begin{equation}
\left\langle \nabla U(\xi),F(\xi,w)\right\rangle  \leq \tilde{s}(\xi,w).
\end{equation}
Hence, $U$ satisfies Definition \ref{defn:Vdiss}.

\underline{Jump Dynamics of $U(\xi)$}: For the jump dynamics, we need to consider the following three cases.
\begin{itemize}
	\item Case 1: when $\xi\in \mathcal{D}_{i}\land\sum_{m\in\mathcal{V}^\text{out}_i}\ell^m_i=0\land\tau_i\geq\tau_\text{MIET}^i$ for some $i\in\Nc$ and $\xi$ jumps according to $\xi^+=G^{a}_i$. In this case, for a subset of agents $m\in\mathcal{M}\subset\mathcal{V}^\text{in}_i$, it may hold that $\tilde{W}_m$ is updated according to \eqref{eq:WleqW2}. Observe that for these updates, $U$ is non-increasing. Additionally, for agent $i$, we have that
	\begin{multline}
    	U(\xi^+)-U(\xi)=\\
    	\tilde{\gamma_{i}}(1)\bar{\phi}_{1,i}(0,0)\tilde{W}^2_{i}(Z_i\mathbf{1}_N,\mathbf{0}_N,y_i,e^\text{out}_i,y_i)\\
    	-\tilde{\gamma_{i}}(0)\bar{\phi}_{0,i}(\tau_i,\sigma_i)\tilde{W}^2_{i}(\mathbf{0}_N,\mathbf{0}_N,y_i,e^\text{out}_i,r_{i})+\varrho_i(e^\text{out}_i).
	\end{multline}
	Using \eqref{eq:etai0bound}, we obtain that $U(\xi^+)-U(\xi)\leq0$ for all $\xi\in\mathcal{D}_{i}$ with $\sum_{m\in\mathcal{V}^\text{out}_i}\ell^m_i=0$, for some $i\in\Nc$.
	\item Case 2: when $\xi\in \mathcal{D}_{i}\land\sum_{m\in\mathcal{V}^\text{out}_i}\ell^m_i=0$ for some $i\in\Nc$ and $\xi$ jumps according to $\xi^+=G^{b}_i$. In this case, for a subset of agents $m\in\mathcal{M}\subset\mathcal{V}_i^\text{in}$, it may hold that $\tilde{W}_m$ is updated according to \eqref{eq:WleqW2}. Observe that for these updates, $U$ is non-increasing. Additionally, for agent $i$, the following subcases hold:
    \begin{enumerate}
        \item $\tau_i\leq\tau_\text{MIET}^i$. Hence we have
        \begin{multline}
            U(\xi^+)-U(\xi)=\hspace{10mm}\\
            \tilde{\gamma}_i(p(\ell_i,b_i))\bar{\phi}_{p(\ell_i,b_i)}(\tau_i,0)\tilde{W}^2_{i}(\ell_i,b_i,y_i,e^\text{out}_i,r_{i})\\
            -\tilde{\gamma}_i(p(\ell_i,b_i))\bar{\phi}_{p(\ell_i,b_i)}(\tau_i,\sigma_i)\tilde{W}^2_{i}(\ell_i,b_i,y_i,e^\text{out}_i,r_{i}).
        \end{multline}
        Since $\bar{\phi}_{p(\ell_i,b_i)}(\tau_i,0)=\bar{\phi}_{p(\ell_i,b_i)}(\tau_i,\sigma_i)$ due to $\tau_i\leq\tau_\text{MIET}^i$, $U(\xi^+)-U(\xi)=0$ in this case.
        \item $\tau_i>\tau_\text{MIET}^i$: Hence we have
    	\begin{multline}
        	U(\xi^+)-U(\xi)=\\
        	\tilde{\gamma_{i}}(0)\bar{\phi}_{0,i}(\tau_\text{MIET}^i,0)\tilde{W}^2_{i}(\ell_i,b_i,y_i,e^\text{out}_i,r_i)\\
        	-\tilde{\gamma_{i}}(0)\bar{\phi}_{0,i}(\tau_i,\sigma_i)\tilde{W}^2_{i}(\ell_i,b_i,y_i,e^\text{out}_i,r_i)+\nu_i(e^\text{out}_i,\tau_i).
    	\end{multline}
    	Using \eqref{eq:etai1bound}, we obtain that $U(\xi^+)-U(\xi)\leq0$ for all $\xi\in\mathcal{D}_{i}\land\sum_{m\in\mathcal{V}^\text{out}_i}\ell^m_i=0$, for some $i\in\Nc$.

    \end{enumerate}
	\item Case 3: when $\xi\in \mathcal{D}_{i}\land\sum_{m\in\mathcal{V}^\text{out}_i}\ell^m_i\geq 1$ for some $i\in\Nc$ and $m\in\mathcal{V}^\text{out}_i$, and $\xi$ jumps according to $\xi^+=G^{c}_{i,m}$, $i\in\Nc$, $m\in\mathcal{V}^\text{out}_i$,
	\begin{multline}
    	U(\xi^+)-U(\xi)=\\
    	\tilde{\gamma_{i}}(p(\ell_i-\Gamma_m\mathbf{1}_N,b_i+\Gamma_m\mathbf{1}_N))\bar{\phi}_{p(\ell_i-\Gamma_m\mathbf{1}_N,b_i+\Gamma_m\mathbf{1}_N),i}(\tau_i,\sigma_i)\\
    	\times\tilde{W}^2_{i}(\ell_{i}-\Gamma_m\mathbf{1}_N,b_i+\Gamma_m\mathbf{1}_N,y_i,e^\text{out}_i,r_{i})\\
    	-\tilde{\gamma_{i}}(p(\ell_i,b_i))\bar{\phi}_{p(\ell_i,b_i),i}(\tau_i)\tilde{W}^2_{i}(\ell_i,b_i,y_i,e^\text{out}_i,r_{i}).
	\end{multline}
	Based on the fact that $p(\ell_i-\Gamma_m\mathbf{1}_N,b_i+\Gamma_m\mathbf{1}_N)=p(\ell_i,b_i)$ and \eqref{eq:WleqW1}, we can conclude that $U(\xi^+)-U(\xi)=0$ for all $\xi\in\mathcal{D}_{i}\land\sum_{m\in\mathcal{V}^\text{out}_i}\ell^m_i>1$, for some $i\in\Nc$.
\end{itemize}

\underline{Persistently flowing property}:

To verify the persistently flowing property, we first consider similar conditions as provided in \cite[Proposition 6.10]{Goebel_Sanfelice_Teel_2012} to show that each maximal solution is complete. First, we show that for any $\xi\in \mathcal{C}\setminus \mathcal{D}$ there exists a neighborhood $S$ of $\xi$ such that, it holds for every $\xi\in S\cap \mathcal{C}$ that $F(\xi,v)\cap T_{\mathcal{C}}(\varphi) \neq \emptyset$, where $T_{\mathcal{C}}(\xi)$ is the tangent cone\footnote{The tangent cone to a set $S\subset \R^{n}$ at a point $x\in\R^{n}$, denoted $T_{S}(x)$, is the set of all vectors $\omega\in\R^{n}$ for which there exist $x_{i}\in S, \tau_{i}>0$ with $x_{i}\to x, \tau \to 0$ as $i \to \infty$ such that $\omega = \lim_{i\to \infty} (x_{i} - x)/\tau_{i}$ (see Definition 5.12 in \cite{Goebel_Sanfelice_Teel_2012}).} to $\mathcal{C}$ at $\xi$.
Observe that for each $\xi\in\mathcal{C}$ for which $\ell^m_i=0$ for all $i,m\in\Nc$ (recall that $\xi=\left(x,e,\tau,\sigma,r,\ell,b,\eta\right)$), $T_{\mathcal{C}}(\xi) = \R^{n_x}\!\times\!\R^{Nn_y}\!\times\!(T_{\R_{\geq 0}}(\tau_1)\times\ldots\times T_{\R_{\geq 0}}(\tau_N))\!\times\!(T_{\R_{\geq 0}}(\sigma_1)\times\ldots\times T_{\R_{\geq 0}}(\sigma_N))\!\times\!\R^{Nn_y}\!\times\!\{0\}^{N^2}\!\times\!\{0\}^{N^2}\!\times\!(T_{\R_{\geq 0}}(\eta_1)\!\times\!\ldots\!\times\! T_{\R_{\geq 0}}(\eta_N))$. Observe also from \eqref{eq:localflowset} and \eqref{eq:localjumpset} that $\mathcal{C}\setminus\mathcal{D}=\bigcap_{i,m\in\Nc}\{\xi\in\mathbb{X}:\ell^m_i=0\land(\sigma_i<d_i\lor\eta_i>0)\}$.
Given the facts that, according to \eqref{eq:Flow} and \eqref{eq:F(xi)}, for all $i\in\Nc$, $\dot{\tau}_i = 1$, $\dot{\sigma}_i=1$ and that $\dot{\eta_i}\geq0$ when  $\eta_i=0$ due to \eqref{eq:ETCTriggerDesign}, it indeed follows that for any $\xi\in \mathcal{C}\setminus\mathcal{D}$ there exists a neighborhood $S$ of $\xi$ such that, it holds for every $\varphi\in S\cap \mathcal{C}$ that $F(\varphi,w)\cap T_{\mathcal{C}}(\varphi) \neq \emptyset$.

Due to the hypothesis that there are no finite escape times during flow, case (b) in \cite[Prop.~6.10]{Goebel_Sanfelice_Teel_2012} is ruled out. Lastly, observe from \eqref{eq:localjumpset} that $G(\mathcal{D})\subset\mathcal{C}\cup\mathcal{D}$ since for all $\xi\in G(\mathcal{D})$, it holds that $\tau_{i}^+\geq0$, $\eta_{i}^+\geq0$ since $\varsigma_i(\yhss{\text{in}}{i})\geq0$ for all $\yhss{\text{in}}{i}\in\R^{n_y}$. As a consequence case (c) in \cite[Prop.~6.10]{Goebel_Sanfelice_Teel_2012} cannot occur and all maximal solutions are complete. In addition, since $d_i>0$ and the number of agents that are receiving and transmitting information is finite, it can be shown that solutions have a (global) average dwell-time; the details are omitted. Therefore, the system is persistently flowing, which completes the proof of Theorem \ref{thm:Theorem}.
 \qed
\end{document}